\documentclass{article}

\usepackage{arxiv}

\usepackage[utf8]{inputenc} % allow utf-8 input
\usepackage[T1]{fontenc}    % use 8-bit T1 fonts
\usepackage{hyperref}       % hyperlinks
\usepackage{url}            % simple URL typesetting
\usepackage{booktabs}       % professional-quality tables
\usepackage{amsfonts}       % blackboard math symbols
\usepackage{nicefrac}       % compact symbols for 1/2, etc.
\usepackage{microtype}      % microtypography
\usepackage{lipsum}
\usepackage{enumitem}
\usepackage{amsthm,amsmath}
\usepackage{nccmath}
\newtheorem{theorem}{Theorem}

\newtheorem{lemma}[]{Lemma}

\usepackage{comment}
\PassOptionsToPackage{square, sort, numbers}{natbib}
\usepackage{times,float}
\usepackage{algorithm}
\usepackage[noend]{algpseudocode}
\usepackage{multirow}
\usepackage{subfigure}
\usepackage{amssymb}
\usepackage{xspace}
\usepackage{todonotes}

\newcommand{\mc}{\mathcal}
\newcommand{\mbb}{\mathbb}

\newcommand{\mbf}{\mathbf}

\newcommand{\ALOI}{\texttt{ALOI}\xspace}
\newcommand{\News}{\texttt{20NG}\xspace}
\newcommand{\CT}{\texttt{Covertype}\xspace}
\newcommand{\LComments}{\texttt{Comments}\xspace}

\newcommand{\Test}{Test\xspace}
\newcommand{\HTNN}{\texttt{HTNN}\xspace}
\newcommand{\LT}{\texttt{LT}\xspace}
\newcommand{\ANN}{\texttt{ANN}\xspace}
\newcommand{\PERCH}{\texttt{PERCH}\xspace}
\newcommand{\Clear}{\texttt{Clear}\xspace}
\newcommand{\LQ}{\texttt{LQ}\xspace}
\newcommand{\Spam}{\texttt{Spam}\xspace}
\newcommand{\EV}{\texttt{EV}\xspace}
\newcommand{\ASPD}{\texttt{AEV}\xspace}
\newcommand{\AEV}{\texttt{AEV}\xspace}
\newcommand{\RP}{\texttt{RP}\xspace}
\newcommand{\LSH}{\texttt{LSH}\xspace}
\newcommand{\SVM}{\texttt{SVM}\xspace}
\def\Re{\mathbb{R}}

\title{Efficient Hierarchical Clustering for Classification and Anomaly Detection}

\author{
  Ishita Doshi\\
  LinkedIn\thanks{This work was done while Ishita Doshi was studying at IIT Gandhinagar, and during her Internship at LinkedIn. She is currently working at LinkedIn.}\\
  \texttt{idoshi@linkedin.com} \\
   \And
   Sreekalyan Sajjalla\\
   Flipkart\thanks{This work was done while Sreekalyan Sajjalla was at LinkedIn. He is currently working at Flipkart.}\\
   \texttt{kalyan1337@gmail.com}\\
   \And Jayesh Choudhari\\
   University of Warwick\\
   \texttt{jayesh.choudhari@warwick.ac.uk} \\
   \And Rushi Bhatt\\
   LinkedIn\\
   \texttt{rbhatt@linkedin.com}\\
   \And Anirban Dasgupta\\
   IIT Gandhinagar\\
   \texttt{anirbandg@iitgn.ac.in}\\
}

\begin{document}
\maketitle

\begin{abstract}
We address the problem of large scale real time classification of content posted on social networks, along with the need to rapidly identify novel spam types. Obtaining manual labels for user generated content using editorial labeling and taxonomy development lags compared to the rate at which new content type needs to be classified. We propose a class of hierarchical clustering algorithms that can be used both for efficient and scalable real-time multiclass classification as well as in detecting new anomalies in user generated content. Our methods have low query time, linear space usage, and come with theoretical guarantees with respect to a specific hierarchical clustering cost function~\cite{Dasgupta:2016} (Dasgupta, 2016). We compare our solutions against a range of classification techniques and demonstrate excellent empirical performance.
\end{abstract}

\section{Introduction}

{For any social networking site, understanding the trends among user generated content is of central importance. Real time identification and classification of new user generated content can help in various tasks such as recommending interesting content to users, ranking the generated content on other users news feed, and most importantly, in the identification of any spam attacks on the site.} How do we detect spam within user generated content with low latency, on a large scale and while also detecting novel attacks and their trends? This is the motivating question for this work.

There are a number of non-trivial aspects to this problem. In popular social networks, the number of comments posted per day orders up to millions. With this kind of a growth, both in the volume of content as well in the variety of topics, it is infeasible to obtain labels at scale since editorial labeling is a time consuming and expensive process. This gives rise to the need of combining labels generated through various sources--- labels obtained through user flagging, labels generated through manual inspection by editors and labels generated through machine classification. This however, remains a noisy signal at best. The severe scarcity of on-time, quality, levels at a fine-grained scale leads content management teams to do classification at multiple levels of granularity as demonstrated in Figure~\ref{fig:scamb}. This allows one to view these classes as a hierarchy.

\begin{figure}
\centering
\subfigure[]{\includegraphics[width=0.22\textwidth]{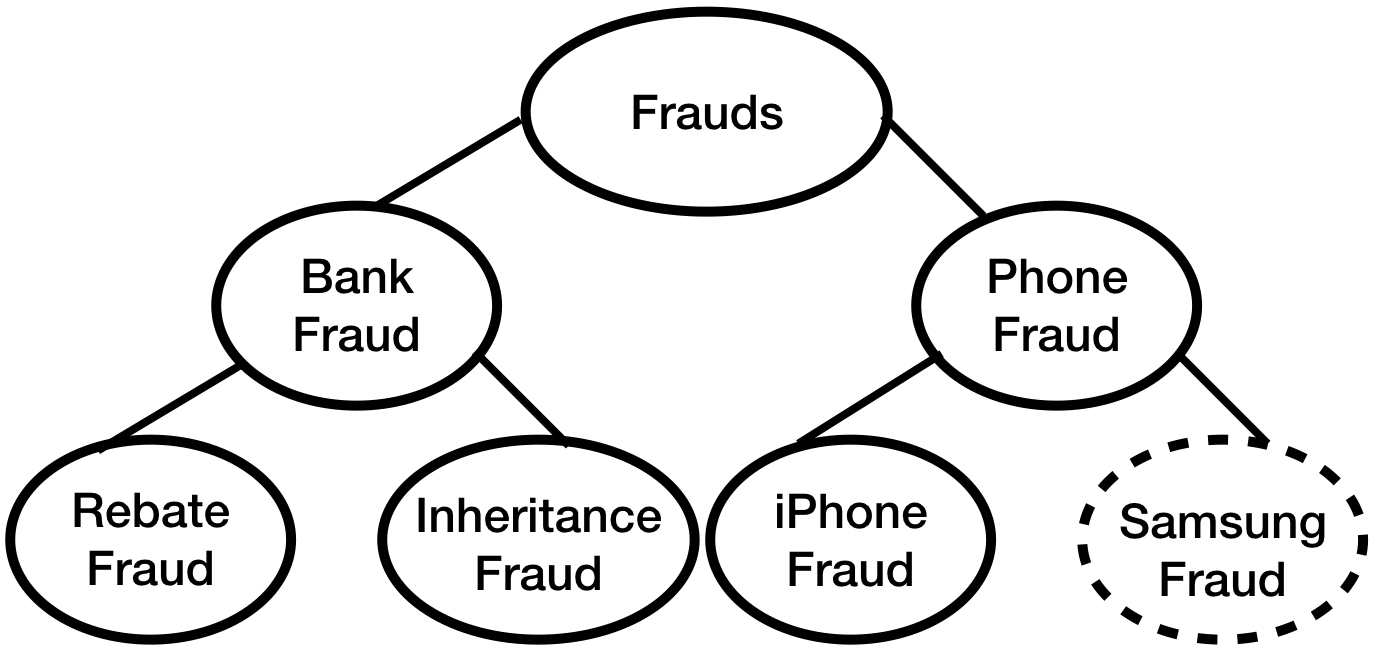}\label{fig:scamb}}
\subfigure[]{\includegraphics[width=0.22\textwidth]{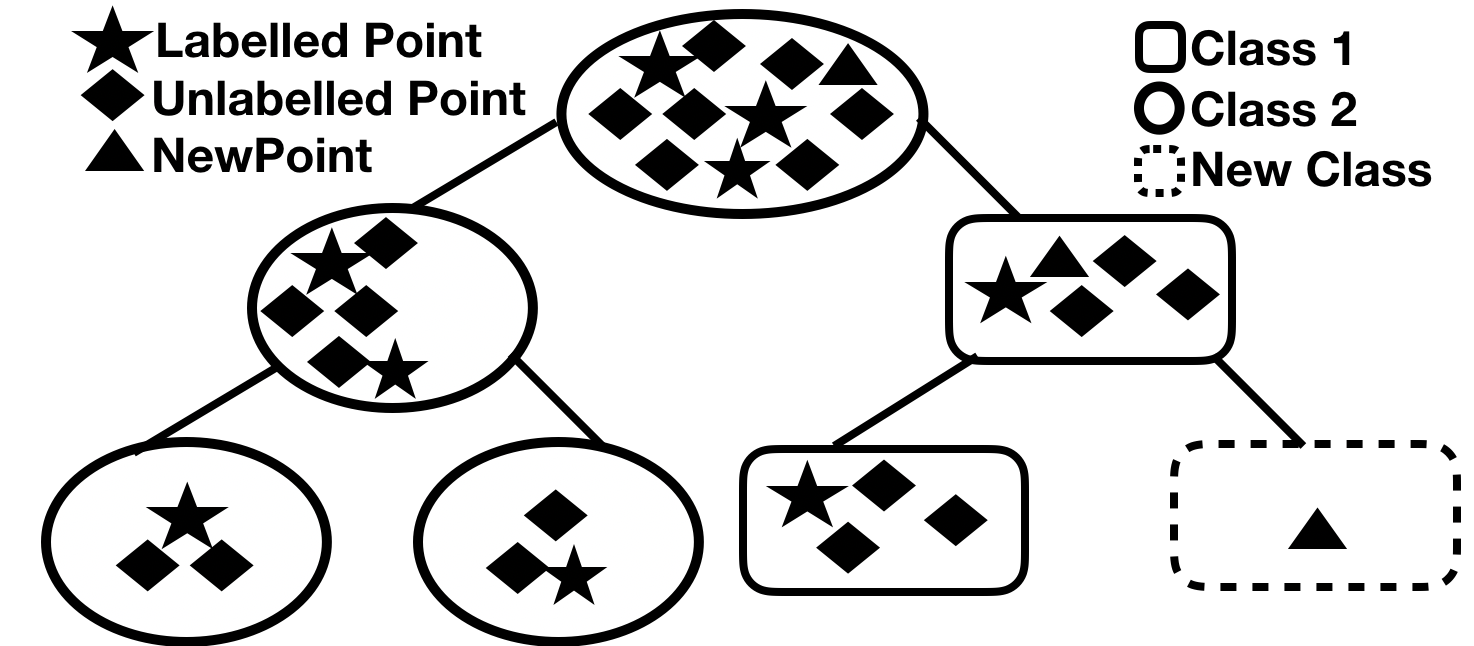}\label{fig:scama}}
\caption{(a) Hierarchical clustering helps determine necessary action on emerging spam ("Samsung Phone fraud") by using the parent category, and facilitates editorial decision making. (b) Labels in parent classes allow us to take necessary action while awaiting new a new class label and also allows us borrow strength when labeling is sparse}\label{fig:scam}
\end{figure}

Content hierarchy is critical to our application in two ways. First, we use it to help human editors determine whether a new class label is needed, and if so, where in the hierarchy: hierarchical clustering allows us to identify a new sub-cluster in the hierarchy which is then passed on to human editors with the taxonomy context to rapidly decide if a new class label is required. Figure~\ref{fig:scamb} shows an example scenario. Let us assume the site has so far only seen scams about selling iPhones (``iPhone Fraud" cluster in Figure~\ref{fig:scamb}). Let us also assume that a novel, and somewhat similar, spam attack for Samsung phones occurs. With an existing hierarchical clustering available, we can identify the new, potentially fast-growing, content cluster to be under the ``Phone Fraud" sub-tree and respond to the attack based on our existing policy on iPhone sale frauds. A new subcategory for ``Samsung Fraud" can be created through hierarchy clustering and exemplars. The new cluster along with hierarchical context can be rapidly sent for manual labeling. 

The second way in which a content hierarchy helps is in dealing with label sparsity in classes. About 9\% of all generated content is spam~\cite{spam}, yet user flagging is generally much sparser. 

Use of hierarchical clustering compensates for sparse labeling in new clusters by borrowing strength, or treatment, from parent nodes~\cite{AgarwalEtAl:2007}. Let us return to our example scenario in Figure~\ref{fig:scama}. We can estimate prevalence of spam in a node either editorially or via user flagging. If the prevalence of spam is high in a parent cluster, the new child cluster is likely to be spam. In this case, even before users start flagging the content or before editorial judgements are taken, appropriate action based on the policies for parent and sibling clusters can be taken.

Motivated by the problem of creating a content hierarchy with the above needs in mind, in this work, we show a principled approach to solving this problem. We show that the above setting, having specific requirements that necessitate combining large scale hierarchical clustering with high precision multi-class classification can be met in a principled manner using algorithmic techniques from hierarchical and spectral clustering. 
The techniques used are both scalable as well as equipped with theoretical guarantees. Using the resulting content hierarchy, of which only elected nodes may then be explicitly labelled by expert labelers, we then accomplish the dual task of classifying new content as well as determining whether a new class label is needed for newly arriving user generated content.

\subsection{Our contributions}

We start with the hierarchy cost defined by Dasgupta~\cite{Dasgupta:2016}. While this is an unsupervised method, our empirical results show that optimizing this cost function does indeed achieve a separation of the underlying classes (Figure~\ref{fig:zoo}) . However, getting the optimal hierarchy for this measure is NP-hard. We provide worst case bounds on the clustering cost of our scalable partitioning strategies as a function of the optimal hierarchical cost~\cite{Dasgupta:2016} in Section~\ref{sec:theory}.  
 
Our method first builds a hierarchical clustering using eigenvectors and approximate eigenvectors and next utilizes any available labeled instances in leaf nodes to carry out multiclass classification. The same clustering is also used to detect novel, fast growing, content types. 

Through extensive benchmarks and in-production measurements, we demonstrate that our method combining unsupervised hierarchical clustering with near-neighbor classification in leaf nodes best suits our application needs described above. Our method scales well at build and run time, it is amenable to adapting to new trends through human-in-loop, while allowing for fallback to a parent category for novel content types. 

One may argue that flat clustering can also allow for a nearest-class assignment to newly discovered content types and thus fit our requirements. However, for a large number of classes, call it $k$, a hierarchical partitioning also allows for better lookup times ($O(\log(k))$ vs $O(k)$) as demonstrated in Section~\ref{sec:purity}. To offset the compute intensive build times of hierarchical structures (compared to, say, k-means) we have implemented a solution in Spark that scales to millions of data points and thousands of classes in a production environment (see Section~\ref{sec:scale}). 

Our proposed solution has the following features. 

%  \begin{enumerate}
    (1) An eigenvector based algorithm that has guaranteed approximations to the hierarchy cost. Note that given the inherent nature of spectral approximations ($\sqrt{OPT}$ instead of a constant factor approximation to the sparsest cut problem in graphs where $OPT$ denotes the optimal) proving such a guarantee is non-trivial. On the plus side, using approximate eigenvectors enables a low latency, high throughput classification of new content into hundreds, possibly thousands of classes to incorporate a rich content taxonomy. 
   
    (2) Ability to detect new content types and human-friendly ways of identifying and labeling new content, to detect anomalous growth in any content category, which may indicate a trending event or a novel spam attack, as well as ability to fall back on surrogate categories for rapid response in such situations.
    
    (3) Robustness to overlapping categories and the ability for a human editor to explicitly incorporate domain knowledge to enforce separation of overlapping categories. For example, promotional content and scams often contain very similar text, but they require very different responses: scams can be harmful to the recipient and must be removed from site while promotional content may simply receive diminished distribution. The use of a hierarchy enables handling overlapping categories since such category pairs will likely be separated deeper in the hierarchy. 
    
    (4) Ability to work with both labeled and unlabeled data: given the high volume of user generated content, it is prohibitive to obtain an up-to-date and comprehensively labeled dataset. For identifying spam, positive examples are rare. Nevertheless, the site still needs to respond to newly arriving unlabeled content clusters. 
% \end{enumerate}

Note that once we build the hierarchy, the resulting multi-class classification could take advantage of 
labels from heterogeneous sources, thereby incorporating all the existing pipelines of label collection. 

\section{Related Work}
\label{sec:relatedwork}

While a number of clustering schemes have been presented in the literature, no hierarchy sensitive cost function has been available for assessing clustering quality until recently. Dasgupta~\cite{Dasgupta:2016} proposed an objective function that is sensitive to both, the pairwise similarity as well as the position of clusters within a hierarchy. We leverage this cost function to bound worst-case costs for our  scheme for three different splitting criteria. Dasgupta's work was followed by an axiomatic development of a family of objective functions~\cite{DBLP:journals/corr/Cohen-AddadKMM17} that have the desirable properties of a hierarchical cost function. However, to the best of our knowledge, ours is the first work that shows the usefulness of the proposed hierarchical objective function in a concrete real world task by relating these costs to a set of web-scale clustering methods. 

While there are other approximation algorithms for hierarchical clustering~\cite{Dasgupta:2016}, these algorithms are based on max-flow\cite{leighton1999multicommodity} and semi-definite programming~\cite{arora2009expander}. These methods are not appropriate in our setting because of two reasons: first, these algorithms do not scale to our needs. Second, it is difficult to use these algorithms incrementally, as required by our anomaly/trend detection applications. Traditional, bottom-up, hierarchical clustering methods such as linkages\cite{Day1984}, do not scale to our application as they are all memory and time intensive both, at build as well as at run time.

Nearest neighbor search algorithms, such as k-d trees\cite{Bentley:1975:MBS:361002.361007} and Locality Sensitive Hashing (LSH)\cite{Indyk:1998:ANN:276698.276876} are not suitable to our application either. The k-d tree algorithm is not suitable for high dimensional and sparse data such as ours. LSH does not have a hierarchical structure which is a core requirement for us, as described above. Algorithms such as LOM Trees \cite{choromanska2015logarithmic} and Recall Trees\cite{daume2017logarithmic} tackle the logarithmic time extreme classification problem. However, these are supervised classification approaches and are unsuitable for our application which requires both, unsupervised as well as supervised modes. 

Spatial data structures for high dimensional data have also been studied in the literature, but not in a classification setting. Our data structures are most similar to Random Projection Trees\cite{dasgupta2008random}, PCA Trees\cite{verma2009spatial}, from which we differ as we use the second eigenvector with the aim of getting close to the sparsest cut, and APD Trees \cite{DBLP:journals/corr/abs-1206-4668}. Again, while these works focus on showing the diameter reduction at each step, we focus on the hierarchical cost function~\cite{Dasgupta:2016} and performance on classification and anomaly detection tasks.

\section{Proposed Solution}
We propose a \textit{top-down} algorithm for creating a hierarchical partitioning as shown in Algorithm~\ref{alg:metahierarchy}. This algorithm relies on a function call \textsc{splittingrule} that essentially determines the strategy to create the hierarchy. We consider the whole dataset, find the best split based on the \textsc{splittingrule} and recursively partition the data till we have sets with cardinality one.

\begin{algorithm}[t]
\caption{Hierarchical Clustering}\label{alg:metahierarchy}
\begin{algorithmic}[1]
\Procedure{MakeTree}{$V$}
\If{$|V| = 1$}
    \State {\textbf{return} leaf containing $|V|$}
\EndIf

\State $h_r, s_r \gets $\textsc{splittingRule}$(V)$
\State $S \gets \{x \in V, h_r \cdot x \le s_r\}$
\State LeftTree $\gets$ MakeTree(S)
\State RightTree  $\gets$ MakeTree($V\setminus S$)
\State \textbf{return} [$h_r$, $s_r$, LeftTree, RightTree]

\EndProcedure

\Procedure{Query}{$T, x$}
\State $r=$root$(T)$.
\If {$T$ contains  $ < B$ points \textbf{or} $r$ has no children}         \State return all of the points.
\Else
    \State $h_r, s_r$ be splitting hyperplane, location for $r$.
    \If{ $x \cdot h_r \le s_r$}  
        \State{Query left child of $r$ with $x$}
    \Else 
        \State{Query right child of $r$ with $x$}
    \EndIf
\EndIf
\EndProcedure

\Procedure{Classify}{$T, x, k$}
\State $S \leftarrow$\textsc{Query}$(T,x)$
\State $C \gets$ $k$ nearest neighbors to $x$ from $S$.
\State $l_x \leftarrow$ majority vote of the labels of $C$.
\State \textbf{return} $l_x$
\EndProcedure
\end{algorithmic}
\end{algorithm}

\begin{algorithm}[t]
\caption{Eigenvector partitioning}\label{alg:eigen}
\begin{algorithmic}[1]
\Procedure{splittingrule}{$V$}
\If{$|V| = 1$}
    \State {\textbf{return} leaf containing $|V|$}
\EndIf

\State Let $A\in \Re^{\ell\times d}$ be the matrix formed by the points.
\State $d = A e$ where $e$ is vector of all ones, $D = \mathsf{diag}(d).$
\State $\tilde{A} = D^{-1/2}A$.
\State $h_t\in \Re^d$ be the second largest right singular vector of $\tilde{A}$.
\State Create $X = \{\tilde{A}_{i*} \cdot h_t,\ \forall i\in [\ell]  \}$.
\State Consider the sorted sequence $X = \langle X_1,\ldots,X_{\ell}\rangle$. \\
Find the 
set $S_j = \{k\in \ell, X_k \le X_j\}$ that has smallest $\gamma(S_j)$. 

\State \textbf{return} $(h_t, X_j)$

\EndProcedure
\end{algorithmic}
\end{algorithm}

\begin{algorithm}
\label{alg:approxevec}
\caption{Approximate Eigenvector}
\label{alg:APD}
\begin{algorithmic}[1]
\Procedure{splittingrule}{V}
\State Let $A\in \Re^{\ell\times d}$ be the matrix formed by $\{v_1, \ldots,v_{\ell}\}$.
\State $v\leftarrow $ random vector in $\Re^d$.
\State $d = A e$ where $e$ is vector of all ones, $D = \mathsf{diag}(d).$
\State $\tilde{A} = D^{-1/2}A$.
\State \% $M_1, M_2$ are $O(\log(n) / \epsilon)$.
\For {$i \in 1\ldots M_1$}
\State $v \leftarrow \tilde{A}^t(\tilde{A}v)$, $v\leftarrow v/\|v\|_2$
\EndFor
\State $u \leftarrow $ random vector in $\Re^d$.
\For{$i \in 1\ldots M_2$}
\State $x \leftarrow (I - vv^t)\tilde{A}^t\tilde{A}x$.
\State $x \leftarrow x/\|x\|_2$
\EndFor
\State Create $X = \{\tilde{A}_{i*} \cdot x,\ \forall i\in [\ell]  \}$.
\State Consider the sorted sequence $X = \langle X_1,\ldots,X_{\ell}\rangle$. \\
Find the 
set $S_j = \{k\in \ell, X_k \le X_j\}$ that has smallest $\gamma(S_j)$. 

\State \textbf{return} $(x, X_j)$.
\EndProcedure
\end{algorithmic}
\end{algorithm}

\begin{algorithm}[htb]
\caption{$2$-means}
\label{alg:2means}
\begin{algorithmic}[1]
\Procedure{SplittingRule}{V}
\State Let $A\in \Re^{\ell\times d}$ be the matrix formed by $\{v_1, \ldots,v_{\ell}\}$.
\State Use k-means++~\cite{arthur2007k} to get a bi-partitioning of $V$.
\State Let $c_1$ and $c_2$ be the two cluster centers.
\State \textbf{return} $h_t = 2(c_1 -c_2)$, $s_t \leftarrow \|c_1^2\| - \|c_2\|^2$.
\EndProcedure
\end{algorithmic}
\end{algorithm}
\begin{algorithm}[htbp]
\caption{Random Partitioning}
\label{alg:random}
\begin{algorithmic}[1]
\Procedure{SplittingRule}{V}
\State $h_t \sim N(0,I_{dxd})$, $s_t \leftarrow 0.$
\State \textbf{return} $h_t/\|h_t\|_2, s_t$.
\EndProcedure
\end{algorithmic}
\end{algorithm}

We now describe four splitting strategies for \textsc{splittingrule} used. 
Let $t$ denote a generic tree-node in the hierarchy, in which we have points $V_t = \{v_1, \ldots, v_{\ell}\}$.
Let $h_t$ and $s_t$ denote the splitting hyperplane and split point respectively, which are computed by the corresponding splitting strategy.  

\begin{enumerate}
    \item \texttt{Random Hyperplane (RP)}: We sample $h_t$ uniformly at random from $S^{d-1}= \{x \in \Re^d, \|x\|_2 = 1\}$. This can be done by sampling $y\sim \mathcal{N}(0, I_d)$ and returning $x = \frac{y}{\|y\|}$. 
    Define $X = \{h_t \cdot v_i, i = 1,\ldots, \ell\}$. The splitting point $s_t$ is chosen as the median of $X$.

    \item \texttt{Eigenvector (EV)}: Let $D\in \Re^{\ell\times \ell}$ be the diagonal matrix with $D_{ii} = \sum_j A_{ij}$. 
    We calculate the normalized feature-feature matrix $F = A^t D^{-1} A \in \Re^{d\times d}$. Let $x_{2}$ be the eigenvector corresponding to the second largest eigenvalue of $F$. 
    We define  $u_2 = D^{-1/2} A x_2/\| D^{-1/2} A x_2\|_2$. 
    Note that $u_{2}$ is the second largest singular vector of $D^{-1/2}A A^t D^{-1/2}$. The coordinates of $u_2$ are sorted. 
    We find the median, say $X_j$ and return the set $S_j = \{k \in \ell, X_k \leq X_j\}$ that has the smallest $\gamma(S_j)$. 
    The pair $(x_2, X_j)$ is returned as the splitting hyperplane and threshold.
    
    \item \texttt{Approximate Eigenvector (AEV)}: Inspired by McGregor et al.~\cite{DBLP:journals/corr/abs-1206-4668}, we use an  approximate eigenvector, rather than the exact one. 
    In Algorithm~\ref{alg:APD}, we present the pseudocode of how we use power iterations to estimate the eigenvector corresponding to the second largest eigenvalue of the normalized covariance matrix.
    
    \item \texttt{$2$-means}: We find $2$-means clustering of the data and define $h_t$ to be the vector that identifies the hyperplane equally distant from the two centers. 
    The threshold $s_t$ is set to be $0$, and the two clusters are returned as partitions. \footnote{Our conjecture is that $2$-means does not provide a worst-case multiplicative guarantee. \cite{moseley2017approximation} demonstrate this claim for a related hierarchical objective.}
\end{enumerate}

For each node in the hierarchy we maintain the splitting hyperplane and the split point, which enables us to query the hierarchy at the test time with individual query points.
We would also like to make a point that hierarchy is created in batch mode.

\section{Theoretical Guarantees}
\label{sec:theory} 

\subsection{Preliminaries}
Our input is a set of points (images or documents), represented as unit norm vectors in $d$-dimensional space. We represent the collection by a matrix $X \in \mbb{R}^{n\times d}$.

Let $ G= (V, E)$ be a symmetric weighted graph and $C$ be the adjacency matrix of $G$, i.e. $C_{ij}$
captures the weight of the edge $\{i, j\}$, where the weights $C_{ij}$ are meant to denote \emph{similarity}, not distance, and $0\le C_{ij}\le 1$. In our case, given a normalized representation of $n$ objects as the rows of the matrix $X$, we will implicitly calculate $C$ as $C = XX^t$ (\textbf{Note:} Our algorithms never actually calculate the matrix $C$, as that is computationally expensive). The (weighted) degree of each node $i$ is defined as $d_i = \sum_{j} C_{ij}$. Given a subset $S$ of vertices, we will refer to the degree or volume of $S$ as $d(S) = \sum_{i\in S} d_i$. The size of the cut $(S, V\setminus S)$ is defined as $C(S, V\setminus S) = \sum_{i\in S, j\in V\setminus S} C_{ij}$.  
We refer to the following standard definitions and results\cite{trevisan2013lecture}:

Let $(S,V\setminus S)$ be a partition of the vertices (i.e. a cut).  The expansion of the cut is
defined as $\phi(S) := \frac{C(S, V\setminus S)}{\min(|S|,|V\setminus S|)}.$ The expansion of the entire graph $G$ is defined as 
$\phi(G) = \min_S \phi(S)$. Similarly, the conductance of the cut $(S,V-S)$ is
$ \gamma(S) = \frac{C(S, V\setminus S)}{\min(d(S), d(V\setminus S))},$ and $\gamma(G) = \min_S \gamma(S)$.
    
When the degrees are very irregular, it is natural to work with the normalized adjacency matrix which is defined as follows: let $D$ be a diagonal matrix such that $D_{ii}= d_i$, and $\tilde{C} = D^{-1/2}CD^{-1/2}$. It is clear that the first
eigenvector of $\tilde{C}$ is proportional to$(\sqrt{d_1}, \ldots, \sqrt{d_n})$ and the first eigenvalue is $1$~\cite{trevisan2013lecture}. 
Let $\lambda_2$ be the second largest eigenvalue of $\tilde{C}$ and $v_2$ the corresponding eigenvector. We quote the following celebrated result about the gap $1 - \lambda_2$. Note that while it is traditional to work with the Laplacian matrix and the smallest eigenvalues, we will state and apply Cheeger's inequality for the second largest eigenvalue of the normalized covariance matrix. This is because we will approximate the eigenvector using power iterations and, it is easier to approximate the second largest eigenvector than the second smallest. 

\begin{theorem}\cite{MOHAR1989274} (\textbf{Cheeger's Inequalities}) $ \frac{1-\lambda_2}{2} \leq \gamma(G) \leq \sqrt{2 (1-\lambda_2)}$.
% \label{thm:cheeger}
%\begin{align}
%      
%\end{align}
Also, denote the sorted coordinates of the second smallest eigenvector of the Laplacian  $x_1\le x_2\ldots \le x_n$, and define the $n$ cuts $S_j$ as $S_j = \{ i, x_i \le x_j\}$. Then one of the cuts $S_j$ has a conductance at most $O(\sqrt{\gamma(G)})$.
\label{thm:cheeger}
\end{theorem}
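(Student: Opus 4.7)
The statement is the classical Cheeger inequality for the normalized adjacency matrix, together with the sweep-cut guarantee; my plan is to prove the two directions separately, obtaining the sweep-cut claim along with the hard direction.

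For the easy direction $\tfrac{1-\lambda_2}{2}\le\gamma(G)$, I would invoke the Courant--Fischer characterization $1-\lambda_2 = \min_{y \perp_D \mathbf{1}} \frac{y^T L y}{y^T D y}$, obtained from the Rayleigh quotient of $\tilde{C}$ by the substitution $y = D^{-1/2}x$, where $L = D-C$ is the combinatorial Laplacian and $\perp_D$ denotes $D$-orthogonality. Given a cut $(S, V\setminus S)$ with $d(S)\le d(V\setminus S)$, I plug in the test vector $y_i = d(V\setminus S)$ on $S$ and $y_i = -d(S)$ off $S$, which is automatically $D$-orthogonal to $\mathbf{1}$. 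A direct expansion yields $y^T L y = C(S,V\setminus S)(d(S)+d(V\setminus S))^2$ and $y^T D y = d(S)\,d(V\setminus S)(d(S)+d(V\setminus S))$, whose ratio is $\frac{C(S,V\setminus S)(d(S)+d(V\setminus S))}{d(S)\,d(V\setminus S)} \le 2\gamma(S)$. Taking the minimum over $S$ closes the easy direction.

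For the hard direction and the sweep-cut statement, let $v_2$ be the eigenvector of $\tilde{C}$ associated with $\lambda_2$ and set $y = D^{-1/2} v_2$, so $y^T L y / y^T D y = 1-\lambda_2$. Shift $y$ by its $d$-weighted median so that both $\{y_i > 0\}$ and $\{y_i < 0\}$ have volume at most $d(V)/2$; this shift only decreases the Rayleigh quotient since constants lie in the bottom eigenspace. Sort the coordinates $y_1\le\cdots\le y_n$ and consider the sweep cuts $S_j = \{i : y_i \le y_j\}$. The crux is to factor $\sum_{ij} C_{ij} |y_i^2 - y_j^2| = \sum_{ij} C_{ij}\,|y_i-y_j|\cdot|y_i+y_j|$ and apply Cauchy--Schwarz, using $\sum_{ij} C_{ij}(y_i-y_j)^2 = 2(1-\lambda_2)\, y^T D y$ together with $\sum_{ij} C_{ij}(y_i+y_j)^2 \le 4\, y^T D y$. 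Sampling a threshold $t$ from the density proportional to $|t|$ on the range of $y$, the expected cut weight and expected min-volume of the random sweep $\{i : y_i \le t\}$ telescope into $\tfrac{1}{2}\sum_{ij} C_{ij}|y_i^2-y_j^2|$ and $\tfrac{1}{2}\,y^T D y$ respectively, so the ratio of expectations is at most $\sqrt{2(1-\lambda_2)}$; hence some deterministic $S_j$ attains this bound, giving both the upper half of the inequality and the sweep-cut guarantee (applied next to $v_2$ viewed as the second-smallest eigenvector of the normalized Laplacian).

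The principal obstacle is the hard direction: one must handle the shift so that the min-volume side is consistent across every sweep, pick the sampling density so that both numerator and denominator of the expected ratio emerge as clean telescoping sums over sweep cuts, and apply Cauchy--Schwarz at precisely the step that produces the Laplacian quadratic form on one side and the degree-weighted norm on the other. The easy direction, the variational substitution $y = D^{-1/2}x$, and the enumeration of sweep cuts are routine.
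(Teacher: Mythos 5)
This theorem is not proved in the paper at all: it is quoted as a classical result, with a citation to Mohar (and the surrounding discussion leans on Trevisan's lecture notes, which the paper also cites). So there is no in-paper argument to compare against; what you have written is the standard textbook proof of Cheeger's inequality, and as a sketch it is essentially correct. The easy direction is fully right: the test vector $y_i = d(V\setminus S)$ on $S$ and $-d(S)$ off $S$ gives exactly the quadratic forms you state, and the factor-of-$2$ loss comes from $d(S)+d(V\setminus S)\le 2d(V\setminus S)$. Your justification of the median shift is also correct for the right reason: since $y\perp_D \mathbf{1}$, shifting leaves $y^TLy$ unchanged and increases $y^TDy$ by $m^2 d(V)$, so the Rayleigh quotient can only drop. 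The one step you flag but do not resolve --- keeping the min-volume side consistent across the sweep --- is indeed the genuinely fiddly part; the standard fix is to split the shifted vector into its positive and negative parts, observe that at least one of them has Rayleigh quotient at most that of the whole vector (via $\frac{a+b}{c+d}\ge\min(a/c,\,b/d)$), and run the threshold argument on that nonnegative part alone, whose support has volume at most $d(V)/2$ by the median property. With that insertion your outline closes, and the sweep-cut bound $\sqrt{2(1-\lambda_2)}$ combined with the easy direction yields the $O(\sqrt{\gamma(G)})$ form stated in the theorem.
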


The following proposition is easy to prove using $\Delta_{min} = \min_i\: d_i$, $\Delta_{max} = \max_i\: d_i$ and replacing $\min(d(S),d(V\setminus S))$ by $\Delta_{min} \min(|S|,$ $|V \setminus S|)$ and $\Delta_{max} \min(|S|, |V \setminus S|)$.

\begin{lemma}
If $C\in \Re^{n\times n}$ is such that $\Delta_{min} \le d_i\le \Delta_{max}$, then $\forall$ set $S$, $\frac{\phi(S)}{\Delta_{max}}\le \gamma(S) \le \frac{\phi(S)}{\Delta_{min}}$.   
\end{lemma}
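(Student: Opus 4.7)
The plan is a direct unpacking of the two definitions $\phi(S) = C(S, V\setminus S)/\min(|S|, |V\setminus S|)$ and $\gamma(S) = C(S, V\setminus S)/\min(d(S), d(V\setminus S))$: since the numerators agree, the entire claim reduces to sandwiching the denominator $\min(d(S), d(V\setminus S))$ between $\Delta_{min} \min(|S|, |V\setminus S|)$ and $\Delta_{max} \min(|S|, |V\setminus S|)$.

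First I would observe that for any subset $T \subseteq V$, the definition $d(T) = \sum_{i\in T} d_i$ together with the hypothesis $\Delta_{min} \le d_i \le \Delta_{max}$ gives the trivial pointwise bound
\[
\Delta_{min}\, |T| \;\le\; d(T) \;\le\; \Delta_{max}\, |T|.
\]
Applying this to both $T = S$ and $T = V\setminus S$ yields two pairs of inequalities $d(S) \ge \Delta_{min}|S|$, $d(V\setminus S) \ge \Delta_{min}|V\setminus S|$, and likewise with $\Delta_{max}$ on the other side.

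Next I would use the elementary fact that if $a_1 \ge b_1$ and $a_2 \ge b_2$, then $\min(a_1, a_2) \ge \min(b_1, b_2)$ (and dually for the upper bound). Combined with the fact that $\Delta_{min}$ and $\Delta_{max}$ can be pulled outside the $\min$ as positive scalars, this produces
\[
\Delta_{min}\,\min(|S|, |V\setminus S|) \;\le\; \min(d(S), d(V\setminus S)) \;\le\; \Delta_{max}\,\min(|S|, |V\setminus S|).
\]

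Finally I would divide $C(S, V\setminus S)$ by each term of this chain, which reverses the inequalities and reproduces exactly $\phi(S)/\Delta_{max} \le \gamma(S) \le \phi(S)/\Delta_{min}$. There is really no hard step here: the entire content of the lemma is that the degree-weighted denominator and the cardinality-weighted denominator differ only by a factor lying in $[\Delta_{min}, \Delta_{max}]$, and the only care needed is the one-line justification that the $\min$ operation preserves the pointwise bounds.
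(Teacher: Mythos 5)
Your proof is correct and follows exactly the route the paper sketches: it states only that the lemma is ``easy to prove'' by replacing $\min(d(S),d(V\setminus S))$ with $\Delta_{min}\min(|S|,|V\setminus S|)$ and $\Delta_{max}\min(|S|,|V\setminus S|)$, which is precisely the sandwich on the denominator that you establish and then invert. Your write-up simply fills in the (trivial) details the paper omits, such as the monotonicity of $\min$ under pointwise bounds.
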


\subsubsection{Hierarchical Clustering Cost}

Consider a node-set $V$ and $C = \{C_{ij}, i,j \in V\}$ be defined as before. 
Let $T$ be a rooted hierarchical clustering tree (not necessarily binary) for graph $G$, where the leaves of the tree $T$ are the nodes in the graph $G$. 
For any two nodes $i$ and $j$ suppose $i \lor j$ denotes the lowest common ancestor in $T$ of leaves $i$ and $j$, $T[i]$ denote the subtree rooted at (tree) node $i$ and $leaves(T[i])$ denote the set of leaves of the subtree $T[i]$. Then, the cost of tree $T$ is defined as follows \cite{Dasgupta:2016}:
\begin{align}
\label{eq:hiercost}
    cost_{V}(T) = \sum_{\{i,j\} \in V\times V} C_{ij} \times |leaves(T[i \lor j])|  
\end{align}

Given a set of points $V$ and similarities $C_{ij}$ among them, the aim is to return a tree that achieves minimum value of the objective in equation~\ref{eq:hiercost}. The above cost function has been justified by various axiomatic approaches~\cite{Dasgupta:2016,DBLP:journals/corr/Cohen-AddadKMM17}, and approximation algorithms~\cite{Charikar:2017:AHC:3039686.3039739}, \cite{monathgradient}, \cite{DBLP:journals/corr/RoyP16a} have been developed for it. We next show that \texttt{EV}, \texttt{AEV}, \texttt{RP} splitting strategies give approximations to the above cost function. 

Our algorithm for the hierarchy cost utilizes a number of instances of
sparsest cut solutions. While Arora et. al (ARV) give a $O(\sqrt{\log n})$ approximation to the sparsest cut, spectral methods (which we employ because of ``queryability'' of the structure) only give guarantees of the form $\sqrt{OPT}$ \cite{trevisan2013lecture}. This implies that the guarantee is better when $OPT$ is higher, e.g. constant for expander graphs. We make use of exact and approximate eigenvectors to perform recursive spectral partitioning in \texttt{EV} and \texttt{AEV} methods, and show that a similar guarantee also holds for the hierarchy cost. The $O(\sqrt{OPT})$ acts as a boon as cost values are typically high due to the $|leaves(i \lor j)|$ factor in the cost function. For our setting $OPT=\Omega(n^3 \min_{u,v} dist(u,v))$, e.g. for unit weighted cliques, it is $\Omega(n^3)$. We note that i) our analysis is worst case and possibly not tight, ii) our algorithms are practical to implement unlike SDP and flow and, iii) our experiments show that the hierarchy output by our algorithms have good performance for the tasks that we posed.

\subsection{Using Eigenvectors}
We first show that our partitioning strategy of using second left singular vector of the matrix $V\in \Re^{\ell \times d}$ gives an approximation to the  objective~\eqref{eq:hiercost}. In order to do this, we quote a particular lemma from~\cite{Dasgupta:2016} that characterizes the structure of any tree, in particular the optimal tree. 

\begin{lemma}
\label{lemma:dasgupta}
\label{L:1} (Lemma 11 of \cite{Dasgupta:2016}):
Pick any binary tree, $T$ on $V$. There exists a partition $A,B$ of $V$, where $\frac{|V|}{3} \leq |A|, |B| \leq \frac{2|V|}{3}$ such that $\frac{C(A,B)}{|A||B|} < \frac{27}{4|V|^3}cost_V(T)$ where $C(A,B) = \sum_{i\in A, j \in B} C_{ij}$, $C_{ij}$ is the weight of the edge between $i$ and $j$, and $cost_V(T)$ refers to the cost of the binary tree on $V$.
\end{lemma}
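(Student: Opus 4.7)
The plan is to identify a balanced cut induced by a single edge of $T$ and then control the cut weight by the hierarchy cost via the LCA sizes. First I would locate a tree node $v$ whose subtree has size in $[|V|/3, 2|V|/3]$ by starting at the root and repeatedly descending to the heavier child, stopping the first time a node of size at most $2|V|/3$ is reached. Let $v'$ be the parent of $v$; by the stopping rule, $|\text{leaves}(T[v'])| > 2|V|/3$, and since $v$ is the heavier child of $v'$, we also have $|\text{leaves}(T[v])| \ge |\text{leaves}(T[v'])|/2 > |V|/3$. Setting $A := \text{leaves}(T[v])$ and $B := V \setminus A$, both parts of the partition lie in $[|V|/3, 2|V|/3]$.

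Next I would bound $C(A, B)$ in terms of $cost_V(T)$. The crucial observation is that for every cut pair $i \in A$, $j \in B$, the least common ancestor $i \lor j$ must lie at or above $v'$ (because $j \notin \text{leaves}(T[v])$), so $|\text{leaves}(T[i \lor j])| \ge |\text{leaves}(T[v'])| > 2|V|/3$. Since the cost sum contains these cut pairs as a nonnegative subset,
$$ cost_V(T) \;\ge\; \sum_{i \in A,\, j \in B} C_{ij} \cdot |\text{leaves}(T[i \lor j])| \;\ge\; \frac{2|V|}{3}\, C(A, B). $$
On the other hand, subject to $|A| + |B| = |V|$ with both in $[|V|/3, 2|V|/3]$, the product $|A||B|$ is minimized at the endpoints, so $|A||B| \ge (|V|/3)(2|V|/3) = 2|V|^2/9$. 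Combining the two bounds yields
$$ \frac{C(A, B)}{|A||B|} \;\le\; \frac{3}{2|V|} \cdot \frac{9}{2|V|^2} \cdot cost_V(T) \;=\; \frac{27}{4|V|^3}\, cost_V(T). $$

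The step I expect to be the key subtlety, more than an obstacle, is using the stopping rule to obtain the stronger bound $|\text{leaves}(T[v'])| > 2|V|/3$ on the LCA's subtree size, rather than merely $|\text{leaves}(T[v])| > |V|/3$. This extra factor of two is exactly what sharpens the final ratio from $27/(2|V|^3)$ down to the claimed $27/(4|V|^3)$; once this observation is in hand, the rest is essentially arithmetic on the balance constraint.
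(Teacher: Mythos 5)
Your proof is correct; note that the paper itself does not prove this statement but imports it verbatim as Lemma 11 of the cited Dasgupta (2016) reference, and your argument is essentially the standard one given there: descend to the deepest node whose subtree still exceeds $2|V|/3$ leaves, take its heavier child as $A$, observe that every cut pair's LCA lies at or above the parent and hence has more than $2|V|/3$ leaves, and combine with $|A||B| \ge 2|V|^2/9$ to get the $\frac{27}{4|V|^3}$ factor. The only cosmetic discrepancy is that you obtain ``$\le$'' rather than the stated strict ``$<$'' (which can fail only in the degenerate case $C(A,B)=0=cost_V(T)$), and this has no effect on how the lemma is used in Theorem~\ref{thm:evec}.
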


We show a bound on the cost of the tree formed by using exact eigenvector as the splitting subroutine in Algorithm~\ref{alg:metahierarchy}.

\begin{theorem}
\label{thm:evec}
Let $V$ contain $n$ points. The point-point similarities are encoded as $C : V\times V \rightarrow [0,1]$. 
Let tree $T^*$ be a minimizer of $cost_V(\cdot)$ and let $T$ be the tree returned by using the second largest eigenvector of $D^{-1/2}VV^tD^{-1/2}$ as a splitting rule. Then 
\begin{math} cost_V(T) \leq c n\;\log(n)\; \sqrt{cost_V(T^*)}\end{math} 
for some $c \le {\Delta_{max} \sqrt{\frac{27}{\Delta_{min}}}}$, where $\Delta_{min} = \min_{i \in V} deg(i)$ and $\Delta_{max} = \max_{i \in V} deg(i)$.
\end{theorem}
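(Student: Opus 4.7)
The overall plan is to analyze the recursion defined by Algorithm~\ref{alg:metahierarchy} by induction on $|V|$. If $V$ is split at the root into $(L,R)$, then
\[
cost_V(T) = |V|\cdot C(L,R) + cost_L(T_L) + cost_R(T_R),
\]
so the main task is to bound $C(L,R)$ by something of the form $\tfrac{c\min(|L|,|R|)}{|V|}\sqrt{OPT_V}$ and then check that the ``cut cost plus recursion'' telescopes to $cn\log n\sqrt{OPT_V}$.

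For the per-level cut bound, I would chain together the three earlier results. First, applying Lemma~\ref{lemma:dasgupta} to $T^*$ restricted to $V$ yields a partition $(A,B)$ with $|V|/3\le|A|,|B|\le 2|V|/3$ and
\[
C(A,B) < \frac{27}{4|V|^3}|A||B|\cdot cost_V(T^*) \le \frac{27}{16|V|}\,OPT_V,
\]
so $\phi(A)=C(A,B)/\min(|A|,|B|)\le \tfrac{81}{16|V|^2}OPT_V$. By the $\phi$--$\gamma$ lemma, $\gamma(G)\le\gamma(A)\le\phi(A)/\Delta_{\min}$. Next, by Cheeger's inequality (Theorem~\ref{thm:cheeger}), the sweep-cut of the second eigenvector of $\tilde C$ yields a set $S$ with $\gamma(S)=O(\sqrt{\gamma(G)})$; this is exactly the cut Algorithm~\ref{alg:eigen} returns. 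Finally, converting back from conductance to raw cut weight via $d(S)\le\Delta_{\max}|S|$ gives
\[
C(L,R) \le \Delta_{\max}\min(|L|,|R|)\cdot\gamma(S) = O\!\left(\frac{\Delta_{\max}}{\sqrt{\Delta_{\min}}}\right)\cdot\frac{\min(|L|,|R|)}{|V|}\sqrt{OPT_V},
\]
where tracking constants carefully should recover $c\le \Delta_{\max}\sqrt{27/\Delta_{\min}}$.

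For the inductive step, I use the monotonicity $OPT_L,OPT_R\le OPT_V$, which holds because the restriction of $T^*$ to any $W\subseteq V$ is a valid tree whose cost satisfies $cost_W(T^*|_W)\le cost_V(T^*)$ (each LCA has no more $W$-leaves than $V$-leaves). Then the inductive hypothesis $f(W)\le c|W|\log|W|\sqrt{OPT_W}$ gives, writing $k=\min(|L|,|R|)$,
\[
f(V)\le c\sqrt{OPT_V}\,\bigl(k + |L|\log|L| + |R|\log|R|\bigr),
\]
and the elementary inequality $|L|\log|L|+|R|\log|R|\le |V|\log(|V|-k)$ together with $|V|\log(|V|/(|V|-k))\ge k$ (from $-\log(1-x)\ge x$) closes the induction to $f(V)\le c|V|\log|V|\sqrt{OPT_V}$.

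The main obstacle, and the place where the recursion could plausibly fail, is that the spectral sweep cut is \emph{not} guaranteed to be balanced, while Lemma~\ref{lemma:dasgupta} only gives a balanced witness. The key point that makes the proof go through is that the Cheeger-based bound on $C(L,R)$ comes out proportional to $\min(|L|,|R|)$ rather than to $|L||R|/|V|$, so even an extremely skewed split (say $k=1$) contributes only an additive $k\sqrt{OPT_V}$ to the recursion, which is absorbed by the $\log$ factor. A secondary subtlety is accounting for the $\Delta_{\max}/\sqrt{\Delta_{\min}}$ gap that appears when passing between the raw cut value $C(\cdot,\cdot)$, the expansion $\phi$, and the conductance $\gamma$; this is exactly the source of the $\Delta_{\max}\sqrt{27/\Delta_{\min}}$ in the stated constant, and is where a $\sqrt{OPT}$ Cheeger-style loss is unavoidable compared to the constant-factor SDP/flow approximations mentioned earlier in the paper.
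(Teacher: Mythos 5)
Your proposal is correct and follows essentially the same route as the paper's proof: Dasgupta's balanced-split lemma as the witness cut, Cheeger's inequality plus the $\Delta_{\min},\Delta_{\max}$ conversion between expansion and conductance to bound the spectral sweep cut, and an induction using $OPT_W \le OPT_V$ on the two children. The only (cosmetic) difference is how the recursion is closed --- you absorb the cut term via $|L|\log|L|+|R|\log|R|\le|V|\log(|V|-k)$ and $-\log(1-x)\ge x$, while the paper uses the equivalent entropy inequality $p\log p+(1-p)\log(1-p)+p(1-p)\le 0$.
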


\begin{proof}
We prove this by induction. The base case, when $n=1$, the $cost(T)=0$.
Let us assume that the claim holds for graphs with at most $n-1$ nodes.

Suppose the optimal tree is $T^*$ and consider the root of $T^*$, with the entire set of points $V$. From Lemma~\ref{lemma:dasgupta}, we can say that there exists a partition $(A,B)$ of $V$ such that $\frac{C(A,B)}{|A||B|}$ will be bounded by $\frac{27 cost(T^*)}{4n^3}.$
 When an exact eigenvector is used as the splitting strategy, for a node $t$, if $\mathcal{V}$ is the data matrix which contains all the points as the rows, and $h_t$ is the second right singular vector of $D^{-1/2}\mathcal{V}$, then $D^{-1/2}\mathcal{V}\cdot h_t$ is the (scaled) second left singular vector of $D^{-1/2}\mathcal{V}$, as well as the second largest eigenvector of $D^{-1/2}\mathcal{V}\mathcal{V}^tD^{-1/2} = D^{-1/2}CD^{-1/2}$. Let $(A,B)$ (say $|A|\le |A\cup B|/2$) denote the spectral cut found. Using Lemma~\ref{thm:cheeger}, we know that $\gamma(A) \le O(\sqrt{\gamma(C)})$, where $\gamma(C)$ denotes the conductance of $C$. Given that the degrees of $C$ all lie in the range $[\Delta_{min},\Delta_{max}]$, we know that $\phi(C)\ge \Delta_{min}\cdot\gamma(C)$, and $\phi(A) \le \Delta_{max}\cdot \gamma(A)$. Hence,
 \begin{align*}
 \phi(A) \le  {\Delta_{max}}\cdot \gamma(A) %\quad
 \\
 \phi(A) \le {\Delta_{max}}\sqrt{\gamma(C)}  \le  \frac{\Delta_{max}}{\sqrt{\Delta_{min}}} \sqrt{\phi(C)} \end{align*}
  
Note that for any partitioning $(S, S^c)$, 
$\frac{1}{n}\phi(S)  \le \frac{C(S,S^c)}{|S||S^c|}\le \frac{2}{n}\phi(S).$
Suppose $(A', B')$ is the split guaranteed by Lemma~\ref{L:1}. Recall that $(A, B)$ is the split returned by the eigenvector. Let $\theta = \frac{\Delta_{max}}{\sqrt{\Delta_{min}}}$. 
Then, using Lemma \ref{L:1}

\begin{align*}
    \frac{C(A,B)}{|A||B|}& \le \frac{2\phi(A,B)}{n} \le  \frac{\theta\sqrt{\phi(A',B')}}{n} \le {\theta}\sqrt{\frac{C(A',B')}{n|A'||B'|}} \\ 
    & \le \frac{\theta}{\sqrt{n}}\sqrt{\frac{27}{4|V|^3} cost_V(T^*)}=\theta \sqrt{\frac{27 cost_V(T^*)}{4|V|^4}}
\end{align*}
$$\frac{n C(A,B)}{|A||B|}\le \theta \sqrt{\frac{27 cost_V(T^*)} {4|V|^2}}$$

Let us consider trees $T^*_A, T_A$ and $T^*_B, T_B$ which are the trees $T^*$ and $T$ restricted to the nodes in $A$ and $B$ respectively. Recall that $T^*$ is the optimal tree.

So,
Let $0 < |A| = pn \leq \frac{n}{2}$. Let $|A| \leq |B| = (1-p)n$. Therefore,
$ nC(A,B) \le p(1-p)n \theta \sqrt{\frac{27 cost_V(T^*)}{4}}$.
Since $|A|,|B| < n$, we apply the induction hypothesis.

\begin{align*}
    cost_A(T_A) + cost_B(T_B)  \leq & 
    n\theta \left( p\sqrt{ \frac{27}{4} cost_A(T^*_A)}\log(pn) 
    +(1-p) \sqrt{\frac{27}{4} cost_B(T^*_B)}\log((1-p)n)\right)\\
    \leq &n\theta\sqrt{\frac{27 cost_V(T^*)}{4}}(p\log p+(1-p)\log(1-p) + \log n)
\end{align*}

Now,
    $cost_V(T) \leq n C(A,B) + cost_A(T_A) + cost_B(T_B) 
            \leq n \theta \frac{\sqrt{27cost_V(T^*)}}{2} (p(1-p) + 
             p\log p + (1-p)\log(1-p) + \log n) $
            $\leq n \log n \theta \frac{\sqrt{27 cost(T^*)}}{2}.$
\end{proof}

\paragraph{Intuition behind the bound.} The above bound is not in the form of an approximation bound expressible as $cost_V(T) \le f(n)  cost_V(T^*)$
for some function of $n$. Recall that the core of the eigenvector algorithm is to use eigenvector based partitioning to create low conductance cuts. 
Spectral methods, when used to find low-conductance cuts, only give guarantees of the form $\sqrt{OPT}$, not $f(n) \times OPT$, which stems from 
Cheeger's inequality (Theorem~\ref{thm:cheeger}). It is well known that this guarantee for spectral cuts is tight, attained for instance by cycle graphs. Furthermore, for a weighted clique with all weights to be $1$, we can show that  for any hierarchy $T$, $cost_V(T) = n^3$. What we get using the bound, by plugging in $\Delta_{\max} = \Delta_{\min} = n$ is $cost_V(T) \leq O(n^3 \log n$). This is the same as the approximation obtained by using the LR\cite{leighton1999multicommodity} algorithm to build the hierarchy. 
\subsubsection{Approximation to the Planted Partition}
In this section, we show that the bound given by spectral partitioning is significantly better if the data does have a nice structure. Traditionally, merits of spectral partitioning have been theoretically demonstrated using stochastic block models or mixture models. We consider a similar setting. Note that this is an idealized setting meant to demonstrate that spectral partitioning based splitting can give good approximation to the hierarchical cost under simple generative models.  

Assume that the similarity matrix $C$ is generated using a $(n,p,q)$-planted partition model with two equisized partitions. $C$ is constructed through the following generative process: for every pair of distinct nodes, $i,j$, 
an edge is added with probability $p$, 
if $i$ and $j$ belong to the same partition and $q$ otherwise.
All $\binom{n}{2}$ decisions are independent.

We use the following result that shows that given $C$, the eigenvector based sparsest cut algorithm would find a partition close to the ground truth. 
\begin{lemma}
\label{lem:mcsherry}
From \cite{mcsherry2001spectral}: In the planted partition model, the spectral sparsest cut algorithm mis-clusters at most a constant, $k$ number of points where $ k = \frac{36p}{(p-q)^2}$.
\end{lemma}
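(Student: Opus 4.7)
The plan is to analyze the planted partition model through matrix perturbation theory. First I would write the observed similarity matrix as $C = \bar C + N$, where $\bar C = \mathbb{E}[C]$ has the ideal block structure ($p$ on the within-partition entries, $q$ on the across-partition entries, and zero on the diagonal), and $N$ is a symmetric, zero-mean random matrix with independent upper-triangular entries bounded by $1$. Up to rank one, $\bar C$ has a clean spectrum: its top eigenvector is proportional to the all-ones vector $\mathbf{1}$ with eigenvalue $\Theta((p+q)n)$, and its second eigenvector is (proportional to) the signed indicator $\chi \in \{-1,+1\}^n$ of the true bipartition, with eigenvalue $(p-q)n/2$. The relevant spectral gap between these two eigenvalues is $\Theta((p-q)n)$.

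Next I would control the size of the perturbation $N$. Since $N$ is symmetric with independent zero-mean Bernoulli-centered entries of variance at most $p(1-p)$ and magnitude bounded by $1$, standard random-matrix concentration (matrix Bernstein, or Füredi--Komlós) gives $\|N\|_{\mathrm{op}} = O(\sqrt{np})$ with high probability. This is precisely where the $p$ in the numerator of the final bound comes from.

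Then I would apply the Davis--Kahan $\sin\theta$ theorem to compare the second eigenvector $\hat v_2$ of $C$ with the ideal eigenvector $\chi/\sqrt{n}$ of $\bar C$. With spectral gap $\Theta((p-q)n)$ and perturbation $\|N\|_{\mathrm{op}} = O(\sqrt{np})$, the theorem yields, after suitable sign alignment, $\bigl\|\hat v_2 - \chi/\sqrt{n}\bigr\|_2^2 \le O\!\bigl(\|N\|_{\mathrm{op}}^2 / ((p-q)n)^2\bigr) = O\!\bigl(p/((p-q)^2 n)\bigr)$. Finally, I would translate this eigenvector closeness into a bound on mis-clustered vertices: vertex $i$ is mis-clustered by thresholding $\hat v_2$ at $0$ only if $|\hat v_2(i) - \chi_i/\sqrt{n}| \ge 1/\sqrt{n}$, so a counting argument bounds the number of mis-classified vertices by $n \cdot \bigl\|\hat v_2 - \chi/\sqrt{n}\bigr\|_2^2 = O(p/(p-q)^2)$, and chasing constants through the above steps produces the stated $36\,p/(p-q)^2$.

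The main obstacle will be sharpening the constants: the explicit $36$ in the conclusion does not come out of a vanilla Davis--Kahan argument, which loses factors at several points. McSherry's proof instead uses a leave-one-out / row-by-row combinatorial projection technique, essentially projecting each row of $C$ onto the top-$2$ subspace computed from the other rows so that the noise is independent of the subspace being used. Reproducing that decoupling carefully (or sharpening a Davis--Kahan-style argument to achieve a matching constant) is the delicate step; the qualitative $O(p/(p-q)^2)$ bound above is the easier part.
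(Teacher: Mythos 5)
The paper does not actually prove this lemma: it is imported verbatim as a citation to McSherry, so there is no internal argument to compare against. Your proposal is therefore a genuine (and reasonable) attempt to supply a derivation, and the route you sketch --- decompose $C=\bar C+N$, bound $\|N\|_{\mathrm{op}}=O(\sqrt{np})$, apply Davis--Kahan, then convert $\ell_2$ eigenvector error into a count of sign errors via $n\,\|\hat v_2-\chi/\sqrt{n}\|_2^2$ --- is the standard modern way to get the qualitative $O(p/(p-q)^2)$ misclustering bound, whereas McSherry's original argument works with projections onto the top-$k$ subspace and a combinatorial clean-up step. Three caveats you should make explicit if you flesh this out. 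First, the relevant eigengap for isolating $\hat v_2$ is $\min(\lambda_1-\lambda_2,\ \lambda_2-\lambda_3)=\min\bigl(qn,\ (p-q)n/2\bigr)$, not simply $\Theta((p-q)n)$; when $q$ is small the gap to $\lambda_1$ dominates and your bound degrades, which is one reason McSherry projects onto the two-dimensional top eigenspace rather than tracking the second eigenvector alone. Second, the statement can only hold with high probability (and needs $p\gtrsim \log n/n$ for the spectral norm bound on $N$), a qualification the lemma as quoted suppresses. Third, the algorithm in this paper is a sweep cut over sorted eigenvector coordinates, not a sign cut at $0$, so strictly one must also argue that the minimum-conductance prefix cut does not do worse than the sign cut; this is routine but not free. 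You correctly flag that the constant $36$ will not fall out of Davis--Kahan; that constant belongs to McSherry's specific accounting, and recovering it is not necessary for the way the lemma is used downstream (the final bound only needs $k=O(p/(p-q)^2)$).
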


\begin{lemma}
\label{lemma:clique}
Let $\mathcal{G}(V,E)$ be an unweighted undirected clique on $n$ nodes, and let $\mathcal{G^{'}}(V^{'},E^{'})$ be another unweighted undirected clique on $n+k$ nodes, then 
$cost_{\mathcal{G}^{'}}(\cdot) = c\times cost_{\mathcal{G}}(\cdot)$ for $n >> 1$ and $k \leq n$ and $c$ is some constant.
\end{lemma}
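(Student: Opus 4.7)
}

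The plan is to establish the stronger fact that on an unweighted clique the hierarchical cost $cost_V(T)$ is \emph{the same} for every binary tree $T$ on the leaves, and to compute it in closed form. Once both sides of the claimed identity are shown to be constants depending only on the number of leaves, the lemma reduces to bounding a simple ratio.

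First I would prove by strong induction on $n$ that for the unit-weight clique on $n$ vertices, every binary hierarchy $T$ has cost
\[
f(n) \;=\; \tfrac{n^3-n}{3}.
\]
The base case $n=1$ is immediate. For the inductive step, consider the root of $T$, which splits the leaves into a left subtree of size $a$ and right subtree of size $n-a$ for some $1\le a\le n-1$. Since every pair is connected by a unit-weight edge, the pairs $(i,j)$ whose LCA is the root contribute exactly $a(n-a)\cdot n$ to $cost_V(T)$ (there are $a(n-a)$ such crossing pairs and the root has $n$ leaves), while the within-subtree pairs contribute $f(a)+f(n-a)$ by the induction hypothesis. Adding and using the identity $a^3+(n-a)^3 = n(3a^2-3an+n^2)$, one checks that
\[
a(n-a)\,n \;+\; f(a) \;+\; f(n-a) \;=\; \tfrac{n^3-n}{3},
\]
independently of the split $a$. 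This shows $f$ is well-defined and establishes the formula.

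Next I would apply this formula to both cliques: $cost_\mathcal{G}(T)=f(n)$ for every binary hierarchy on $\mathcal{G}$ and $cost_{\mathcal{G}'}(T')=f(n+k)$ for every binary hierarchy on $\mathcal{G}'$. Since both costs are tree-independent, the asserted relation $cost_{\mathcal{G}'}(\cdot)=c\cdot cost_\mathcal{G}(\cdot)$ holds with
\[
c \;=\; \frac{f(n+k)}{f(n)} \;=\; \frac{(n+k)^3-(n+k)}{n^3-n}.
\]
For $n\gg 1$ and $k\le n$, the numerator is at most $(2n)^3=8n^3$ and the denominator is at least $n^3/2$, so $c$ is bounded between $1$ and a universal constant (approximately $8$ in the limit $n\to\infty$), establishing the claim.

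The only nontrivial step is the invariance of $cost_V(T)$ across binary trees on a clique; the algebraic identity in the inductive step is the crux. Everything else is bookkeeping. Note that the hypothesis $n\gg 1$ is exactly what lets us absorb the $-n$ correction in $f(n)=(n^3-n)/3$ into the constant, and $k\le n$ is what keeps $c$ from blowing up with $k$.
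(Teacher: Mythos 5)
Your proposal is correct and follows essentially the same route as the paper: both arguments rest on the fact that every binary hierarchy on an unweighted clique on $m$ nodes has cost $\tfrac{1}{3}(m^3-m)$, and then bound the ratio $\tfrac{(n+k)^3-(n+k)}{n^3-n}$ by a constant using $k\le n$ and $n\gg 1$. The only difference is that you prove the clique cost formula by induction (and your algebraic identity checks out), whereas the paper simply cites it from Dasgupta's work.
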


\begin{proof}
From \cite{Dasgupta:2016}, we know that 
$cost_\mc{G}(T) = \frac{1}{3}(n^3-n)$ and $cost_{\mathcal{G}^{'}}(T) = \frac{1}{3}((n+k)^3 - n - k)$, and every tree has the same cost.

Now,

\begin{align*}
    \frac{cost_{\mathcal{G}^{'}}(T)}{cost_\mc{G}(T)} &= \frac{\frac{1}{3}((n+k)^3 - n - k)}{\frac{1}{3}(n^3-n)} 
    = \frac{(n^3 + k^3 + 3nk(n+k)) - n - k}{n^3 - n}\\
    &= \frac{n^3 (1 + \frac{k^3}{n^3} + 3\frac{k(n+k)}{n^2} - \frac{1}{n^2} - \frac{k}{n^3})}{n^3(1 - \frac{1}{n^2})}
    = \frac{(1+1+3)}{1} \quad \quad (Since \: k \leq n)\\
    &= (1 + \epsilon) cost_{\mathcal{G}^{'}}(T) = (c)cost_\mc{G}(T)
\end{align*}

\end{proof}

\begin{theorem}
    Let the matrix $C$ be generated from the above planted partition model. Let $V$ contain $n$ points and let $T^*$ be a minimizer of $cost_V(\cdot)$ and if $T$ is the tree returned by using the splitting rule of \texttt{EV}~ inside Algorithm~\ref{alg:metahierarchy}, then $cost_V(T) \leq c_1 cost_V(T^*)$ where $c_1 = max\left( 1 + \frac{72p}{qn(p-q)}, c\right)$ and $c$ is a constant.
\end{theorem}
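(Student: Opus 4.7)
The plan is to decompose both $cost_V(T)$ and $cost_V(T^*)$ into a top-level cut contribution plus the costs of the two recursive subtrees, and then bound each piece separately: the top-level using Lemma~\ref{lem:mcsherry} on the quality of spectral recovery in the planted partition, and the recursive parts using Lemma~\ref{lemma:clique} on the near-invariance of hierarchical cost over a clique. This matches the two-term structure of $c_1$, where $1+72p/(qn(p-q))$ will come from the top cut and $c$ will come from the recursion.

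For the top level, let $(A,B)$ be the true planted partition and $(A',B')$ the cut produced by EV at the root. Lemma~\ref{lem:mcsherry} gives at most $k=36p/(p-q)^2$ misclassified points, so $|A\triangle A'|+|B\triangle B'|\le 2k$. I would then split the edges across $(A',B')$ into four classes according to whether each endpoint is correctly or incorrectly classified, and compute $\mathbb{E}[C(A',B')]$ in each class using the $p$/$q$ rule; a direct calculation gives $\mathbb{E}[C(A',B')]\le qn^2/4 + (p-q)kn/2 + O(k^2)$, since swapping a point across partitions replaces $n/2$ between-edges (weight $q$) with $n/2$ within-edges (weight $p$). Hence the top-level contribution $n\cdot C(A',B')$ exceeds the optimal top-level contribution $n\cdot C(A,B)\approx qn^3/4$ by a multiplicative factor of at most $1 + 2(p-q)k/(qn) = 1 + 72p/(qn(p-q))$.

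For the recursive part, both $A'$ and $B'$ are cliques on $\approx n/2$ vertices with intra-partition edge weight $\approx p$, perturbed only by the $O(k)$ misclassified points which contribute $O(kn)$ ``contaminated'' edges of weight $q$ instead of $p$. Since $k$ is a constant independent of $n$, Lemma~\ref{lemma:clique} applies to each sub-instance, yielding $cost_{A'}(T_{A'}) + cost_{B'}(T_{B'}) \le c\,(cost_A(T^*_A) + cost_B(T^*_B))$ for the constant $c$ produced by that lemma. Combining with the top-level bound,
\begin{align*}
cost_V(T) &\le \bigl(1+\tfrac{72p}{qn(p-q)}\bigr)\,n\,C(A,B) + c\,\bigl(cost_A(T^*_A)+cost_B(T^*_B)\bigr)\\
&\le \max\!\Bigl(1+\tfrac{72p}{qn(p-q)},\,c\Bigr)\,cost_V(T^*) = c_1\,cost_V(T^*).
\end{align*}

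The main obstacle will be handling the randomness of $C$ cleanly. Lemma~\ref{lem:mcsherry} provides its guarantee with high probability, and the expected computations above only equal the actual cut values up to concentration; a fully rigorous version must either state the theorem in expectation/with high probability, or carry Chernoff-style bounds through the recursion to control $C(A',B')$ and the internal edge sums simultaneously. Absorbing the $O(k^2)$ cross-terms and the $O(k\cdot n/2)$ contaminated sub-cluster edges into constants relies crucially on $k=O(1)$ compared with $n$, which is why the theorem isolates the $n$-dependence inside the $72p/(qn(p-q))$ term and buries the rest in the constant $c$.
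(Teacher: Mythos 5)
Your proposal is correct and follows essentially the same route as the paper: bound the top-level cut using Lemma~\ref{lem:mcsherry} (yielding the $1+\frac{72p}{qn(p-q)}$ factor), handle the two sub-instances as near-cliques via Lemma~\ref{lemma:clique} (yielding the constant $c$), and take the maximum. Your closing caveat about randomness is well taken --- the paper's own argument is likewise carried out in expectation ($\mbb{E}\,cost_V(T) \le c_1\,\mbb{E}\,cost_V(T^*)$) even though the theorem statement omits the expectation.
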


\begin{proof}

Let $d$ be the expected degree of $\mc{G}$. Then, $d = (p+q)\frac{n}{2} - p \approx (p+q)\frac{n}{2}$. Since $\mc{G}$ is an almost regular graph, we do not perform the degree normalization for this theorem.

We quote a small result on the first eigenvector of $C$ and its corresponding $\mc{L} = I - D^{-1/2}CD^{-1/2}$ where $D$ is the diagonal matrix with $d_{ii} = d$. The first eigenvectors of $C, \mc{L}$ are given as

\[C\mbf{1} = \frac{n}{2}(p+q)\mbf{1} \quad \text{and}
 \quad \mc{L}\mbf{v} = 1 \mbf{v} \quad \quad (\mbf{v} = d^{-1/2}\mbf{1})\]

Let $T$ be the tree that we get using Algorithm~\ref{alg:eigen} inside Algorithm~\ref{alg:metahierarchy}, and $T^*$ be the optimal tree. For the planted partitions, the expected optimal cost is given by
\begin{align*}
    \mbb{E}cost_V(T^*) &=  \sum_{i,j}\mbb{E}w_{ij} |leaves(T^*[i \lor j])| \\
    & = \sum_{i,j} Prob[edge \: (i,j)]\: |leaves(T^*[i \lor j])| 
    \\ & = \frac{qn^3}{4} + 2\frac{p}{3}\left(\frac{n^3}{8} - \frac{n}{2}\right)
\end{align*}
Using Lemma~\ref{lem:mcsherry}, we misclassify at most $k = \frac{36p}{(p-q)^2}$ vertices, the cost of the top split in $T$ is

\begin{align*}
    \mbb{E}cost_V(T) &\leq \frac{qn^3}{4} + \frac{36p}{2(p-q)^2}pn^2 - \frac{36p}{2(p-q)^2}qn^2 \\
   & = \frac{qn^3}{4} + \frac{36pn^2}{2(p-q)} 
    = \frac{qn^3}{4}\left(1 + \frac{72p}{qn(p-q)}\right)
\end{align*}

For the future splits, we make two assumptions
\begin{enumerate}
    \item Some edges will have weight $C_{ij} < p$, and
    \item The size of both the partitions will be in the range $[\frac{n}{2} - k, \frac{n}{2}+k]$, and $k = \frac{36p}{(p-q)^2}$.
\end{enumerate}

Also, since we have two equi-sized clusters, $k \leq \frac{n}{2}$. Now, let the two partitions identified by $A,B$. Using Lemma~\ref{lemma:clique},

\begin{align*}
    \mathbb{E} cost_A(T_A) + \mbb{E}cost_B(T_B) & \leq \frac{2p}{3}c \left(\frac{n^3}{8} - \frac{n}{2}\right)
\end{align*}

Thus, the total cost can be bounded by

\begin{align*}
    \mathbb{E}cost_V(T) & \leq \frac{qn^3}{4}\left(1 + \frac{72p}{qn(p-q)}\right) + c\frac{2p}{3}\left(\frac{n^3}{8} - \frac{n}{2}\right)\\
    &\leq c_1 \left(\frac{qn^3}{4} + \frac{2p}{3}\left(\frac{n^3}{8} - \frac{n}{2}\right)\right)
    \quad \quad \left( where\: c_1 = max\left(1+ \frac{72p}{qn(p-q)}, c\right) \right) \\
    &= c_1 \mbb{E}cost_V(T^*)
\end{align*}
\end{proof}

While we state the above result for simple planted partition model, it is possible to come up with an analogous statement for other variants, e.g. when the points come from a Gaussian mixture model. 

\subsection{Using Approximate Eigenvectors}

Algorithm~\ref{alg:APD} starts with a random vector and applies power iterations to the vector. We thus have a close approximation to the first eigenvector, using which we get a close approximation to the second, again by applying power iteration. We first quote a result that shows that assuming we have approximated the first singular vector $v_1$, the second singular vector of $\tilde{C}$ is also well approximated. The proof of the main theorem follows that of Theorem~\ref{thm:evec}.

\begin{theorem}(Theorem 4.6 in \cite{trevisan2013lecture})
\label{th:approx_lambda_2_luca_lb}
Let $M\in \Re^{d\times d}$ be a PSD matrix, $k$ a positive integer and $\epsilon > 0$. If $\mathbf{v}_d$ is the eigenvector corresponding largest eigenvalue of M, then with probability $> 3/16$ over the choice of $x_0$, the power iteration algorithm outputs a vector $x \perp \mathbf{v}_1$, such that
$\frac{{x}^tM{x}}{{x}^t{x}}  \geq  \frac{\lambda_2 (1-\epsilon)}{1 + 4n(1-\epsilon)^{2k}}.$
\end{theorem}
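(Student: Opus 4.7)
The plan is to analyze the trajectory of the power-iteration in the eigenbasis of $M$ and exploit the fact that eigendirections with eigenvalue below $\lambda_2(1-\epsilon)$ are damped relative to $v_2$ at the geometric rate $(1-\epsilon)^{2k}$. Let $\lambda_1\ge\lambda_2\ge\cdots\ge\lambda_n\ge 0$ be the eigenvalues of $M$ with orthonormal eigenvectors $v_1,\ldots,v_n$. Since $M$ leaves the orthogonal complement of $v_1$ invariant, I restrict attention to $x_0':=(I-v_1v_1^t)x_0$, written as $x_0'=\sum_{i\ge 2}\alpha_i v_i$; after $k$ iterations the (unnormalized) output is $y_k=M^k x_0'=\sum_{i\ge 2}\alpha_i\lambda_i^k v_i$, and the algorithm returns $x=y_k/\|y_k\|$.

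Next I bound the Rayleigh quotient directly by separating ``good'' and ``bad'' eigendirections. Partition $\{2,\ldots,n\}$ into $G=\{i:\lambda_i\ge\lambda_2(1-\epsilon)\}$ and its complement $B$. The numerator obeys
\[ \sum_{i\ge 2} \alpha_i^2\lambda_i^{2k+1} \;\ge\; \lambda_2(1-\epsilon)\sum_{i\in G}\alpha_i^2\lambda_i^{2k}, \]
while the contribution of $B$ to the denominator is at most $(1-\epsilon)^{2k}\lambda_2^{2k}\sum_{i\in B}\alpha_i^2\le(1-\epsilon)^{2k}\lambda_2^{2k}$. Since $2\in G$, the good part of the denominator is at least $\alpha_2^2\lambda_2^{2k}$. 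Dividing, normalizing by $\lambda_2^{2k}$, and using $S_B/S_G\le(1-\epsilon)^{2k}/\alpha_2^2$, one obtains
\[ \frac{x^tMx}{x^tx}\;\ge\;\frac{\lambda_2(1-\epsilon)}{1+(1-\epsilon)^{2k}/\alpha_2^2}. \]
Matching this against the target bound reduces the theorem to showing $\alpha_2^2\ge 1/(4n)$ with probability $>3/16$ over the random $x_0$.

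For this anti-concentration step, draw $x_0$ uniformly from the sphere (equivalently $g/\|g\|$ with $g\sim\mathcal{N}(0,I_n)$); by rotational invariance $\alpha_2^2=\langle x_0,v_2\rangle^2$ has the same law as $g_2^2/(g_1^2+\cdots+g_n^2)$, i.e.\ a $\mathrm{Beta}(\tfrac12,\tfrac{n-1}{2})$ distribution. An elementary computation (or comparison of a single Gaussian coordinate against the concentration of $\|g\|^2$ around $n$) shows $\Pr[\alpha_2^2\ge 1/(4n)]>3/16$. Substituting this into the displayed Rayleigh bound yields exactly the claimed inequality.

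The main obstacle is this last anti-concentration argument: Markov- or Chebyshev-style tools control only how large $\alpha_2^2$ can be, whereas here I need to rule out that it is unusually small. The right handle is that a single standard Gaussian coordinate exceeds a constant multiple of its standard deviation on a fixed probability mass; transferring this to the ratio $g_2^2/\|g\|^2$ and calibrating the threshold to $1/(4n)$ so that the probability clears $3/16$ is the one genuinely delicate calculation, while the spectral decomposition and the separation of good/bad eigenvalues are routine.
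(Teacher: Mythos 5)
The paper does not prove this statement; it is quoted verbatim (as Theorem 4.6) from Trevisan's lecture notes, so there is no in-paper proof to compare against. Your argument is the standard power-iteration analysis and is essentially the proof given in that source: the eigenbasis decomposition, the split into $G=\{i:\lambda_i\ge\lambda_2(1-\epsilon)\}$ and $B$, and the resulting bound $\frac{x^tMx}{x^tx}\ge\frac{\lambda_2(1-\epsilon)}{1+(1-\epsilon)^{2k}/\alpha_2^2}$ are all correct and reduce the theorem to $\Pr[\alpha_2^2\ge 1/(4n)]>3/16$, exactly as you say. The one step you leave soft is that anti-concentration bound, and the particular route you sketch does not quite close: comparing $\Pr[g_2^2\ge 1/2]\approx 0.48$ against a Markov bound $\Pr[\|g\|^2>2n]\le 1/2$ leaves nothing over, and sharper $\chi^2_n$ concentration only rescues this for large $n$. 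The clean, dimension-uniform tool is the Paley--Zygmund inequality applied to $Z=\langle x_0,v_2\rangle^2$: with $\mathbb{E}[Z]=1/n$ and $\mathbb{E}[Z^2]\le 3/n^2$ one gets $\Pr\bigl[Z\ge\tfrac14\mathbb{E}[Z]\bigr]\ge(3/4)^2\cdot\tfrac13=3/16$, which is precisely where both the constant $3/16$ and the factor $4n$ in the statement come from. With that substitution your proof is complete and matches the cited one.
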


\begin{lemma}Consider a matrix $M = I - L = D^{-1/2}XX^tD^{-1/2}$ and $0= \lambda_1 \leq ... \leq \lambda_n \leq 2$ are the eigenvalues of $L$, the laplacian matrix. By running power iteration on $M' = X^tD^{-1}X$, we get
\begin{math}
\label{eq:approx_lambda}
    \frac{{x}^tL{x}}{{x}^t{x}} \leq \lambda_2 + \epsilon
\end{math}
\end{lemma}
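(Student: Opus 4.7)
The plan is to reduce to Theorem~\ref{th:approx_lambda_2_luca_lb}, transferring its Rayleigh--quotient guarantee from the $d\times d$ matrix $M' = X^t D^{-1} X$ to the $n\times n$ Laplacian $L = I - M$, where $M = D^{-1/2} X X^t D^{-1/2}$. I would set $A = D^{-1/2}X$, so that $M = AA^t$ and $M' = A^tA$ share the same nonzero spectrum. Let $\mu_1 \ge \mu_2 \ge \dots$ denote their common eigenvalues, and note that $\mu_1 = 1$ (the known top eigenvalue of the normalized similarity matrix) while $\mu_2 = 1 - \lambda_2$. Applying Theorem~\ref{th:approx_lambda_2_luca_lb} to the PSD matrix $M'$ with $M_2 = O(\log(n)/\epsilon)$ deflated power iterations yields a unit vector $\tilde x \perp v_1^{(M')}$ satisfying
$$\tilde x^t M'\tilde x \;\ge\; \frac{\mu_2(1-\epsilon)}{1 + 4n(1-\epsilon)^{2M_2}} \;\ge\; \mu_2 - O(\epsilon),$$
where the last inequality follows because the choice of $M_2$ forces $4n(1-\epsilon)^{2M_2} \le \epsilon$, and $(1-\epsilon)/(1+\epsilon) \ge 1 - 2\epsilon$.

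The next step is to lift $\tilde x\in\Re^d$ to the vector $x = A\tilde x/\|A\tilde x\|_2 \in \Re^n$, whose Rayleigh quotient in $L$ is what we actually want to bound. By the SVD correspondence between left and right singular vectors of $A$, the vector $x$ lies in the span of the left singular vectors $u_i$ with $i\ge 2$, so $x\perp u_1^{(M)}$. Expanding $\tilde x = \sum_{i\ge 2}\alpha_i v_i$ with $\sum_{i\ge 2}\alpha_i^2 = 1$ in the eigenbasis of $M'$, a direct calculation gives
$$x^t M x \;=\; \frac{\tilde x^t (M')^2 \tilde x}{\tilde x^t M' \tilde x} \;=\; \frac{\sum_{i\ge 2}\alpha_i^2\mu_i^2}{\sum_{i\ge 2}\alpha_i^2\mu_i}.$$
Interpreting the right-hand side as $\mathbb{E}[\mu^2]/\mathbb{E}[\mu]$ under the probability measure $p_i = \alpha_i^2$ and using nonnegativity of variance ($\mathbb{E}[\mu^2]\ge \mathbb{E}[\mu]^2$), I obtain $x^t M x \ge \tilde x^t M' \tilde x \ge \mu_2 - O(\epsilon)$. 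Since $L = I - M$ and $\|x\|_2 = 1$, this immediately gives
$$\frac{x^t L x}{x^t x} \;=\; 1 - x^t M x \;\le\; 1 - \mu_2 + O(\epsilon) \;=\; \lambda_2 + O(\epsilon),$$
and absorbing the constant into $\epsilon$ finishes the bound.

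The main obstacle I expect is that Theorem~\ref{th:approx_lambda_2_luca_lb} requires \emph{exact} orthogonality to $v_1^{(M')}$, whereas Algorithm~\ref{alg:APD} only has an estimate $\tilde v_1$ of the top eigenvector obtained from its first $M_1$ power iterations. I would handle this by choosing $M_1 = O(\log(n)/\epsilon)$ so that $\|\tilde v_1 - v_1^{(M')}\|_2 = O(\epsilon)$, and then arguing via a standard perturbation argument that deflating by $\tilde v_1$ instead of $v_1^{(M')}$ changes the achievable Rayleigh quotient by at most $O(\epsilon)$. This step requires some care because a spectral gap between $\mu_1$ and $\mu_2$ is not assumed a priori, but in the regime where the lemma is useful ($\mu_2$ bounded away from $1$, i.e.\ $\lambda_2$ bounded away from $0$) such a gap is automatic.
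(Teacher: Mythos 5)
Your proposal is correct and follows essentially the same route as the paper: both rest on the observation that $M = Y^tY$ and $M' = YY^t$ (with $Y = X^tD^{-1/2}$) share their nonzero spectrum, and both then invoke the quoted power-iteration guarantee (Theorem~\ref{th:approx_lambda_2_luca_lb}) with $k = O(\log n/\epsilon)$ to turn $\frac{x^tMx}{x^tx}\ge(1-\lambda_2)(1-\epsilon)$ into $\frac{x^tLx}{x^tx}\le\lambda_2+\epsilon$. The only substantive difference is completeness: the paper simply asserts the inequality can be proved ``using $M$ instead of $M'$,'' whereas you explicitly lift the output vector from $\Re^d$ to $\Re^n$ via $x = A\tilde x/\|A\tilde x\|_2$ and check that the Rayleigh quotient does not degrade, and you also flag the approximate-deflation issue that the paper passes over in silence.
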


\begin{proof}

If $0= \lambda_1 \leq \ldots \leq \lambda_n \leq 2$ are the eigenvalues of $L$, $1 = 1- \lambda_1 \geq \ldots \geq 1 - \lambda_n \geq -1$ are the eigenvalues of $M$.

Now, $M = D^{-1/2}XX^tD^{-1/2} = {(X^tD^{-1/2})}^t{(X^tD^{-1/2})}$. Note that $M$ is a PSD matrix, and let $M = Y^tY$, where $Y = (A^tD^{-1/2})$. Also, eigenvalues for $M$ will be bounded in the range $[0,1]$.
Also, eigenvalues for $Y^tY$ are the same as the eigenvalues for $YY^t$. And, $YY^t = X^tD^{-1}X = M'$. This also implies that, since we use $C = XX^t$, the eigenvalues of $L$ are also in the range $[0,1]$.

Thus, we prove the inequality~\ref{eq:approx_lambda} using $M$ instead of $M'$. Using Theorem~\ref{th:approx_lambda_2_luca_lb}, using power method and setting $k = O(\frac{\log n}{\epsilon})$, we have

\begin{align*}
    \frac{x^tMx}{x^tx} \geq (1 - \lambda_2)(1 - \epsilon), \qquad \quad \quad 
    \frac{x^tLx}{x^tx} \leq \lambda_2 + \epsilon
\end{align*}
\end{proof}

\begin{theorem}
\label{thm:approxev}
Let $V$ contain $n$ points. The point-point similarities are encoded as $C : V\times V \rightarrow [0,1]$. 
Let tree $T^*$ be a minimizer of $cost_V(\cdot)$ and let $T$ be the tree returned by using the Algorithm~\ref{alg:metahierarchy} with the splitting rule defined in Algorithm~\ref{alg:APD}. Then 
\begin{math} cost_V(T) \leq c n\;\log(n)\; \sqrt{cost_V(T^*) + \epsilon}\end{math} 
for some $c \le {\Delta_{max} \sqrt{\frac{27}{\Delta_{min}}}}$, where $\Delta_{min} = \min_{i \in V} deg(i)$ and $\Delta_{max} = \max_{i \in V} deg(i)$ and $\epsilon > 0$ is a small quantity.
\end{theorem}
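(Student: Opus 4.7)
The plan is to replay the inductive argument of Theorem~\ref{thm:evec} almost verbatim, with the only modification being that the Rayleigh quotient produced by power iteration is only $\epsilon$-close to $\lambda_2$, which is carried through the Cheeger bound as an additive error inside the square root. The base case $n=1$ is trivial since $cost(T)=0$. For the inductive step, fix the approximate eigenvector $x$ returned by Algorithm~\ref{alg:APD} and invoke the preceding lemma to obtain $x^tLx/x^tx \le \lambda_2 + \epsilon$. The sweep-cut variant of Cheeger's inequality (Theorem~\ref{thm:cheeger}), applied to the sorted coordinates of $x$, produces a cut $(A,B)$ whose conductance is at most $\sqrt{2(1-(x^tMx/x^tx))}=\sqrt{2(1-\lambda_2)+2\epsilon}$, and combining with the lower half of Cheeger gives $\gamma(A) = O\!\bigl(\sqrt{\gamma(G)+\epsilon}\bigr)$.

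Next I would translate from conductance to expansion using the degree bounds exactly as in Theorem~\ref{thm:evec}: setting $\theta = \Delta_{max}/\sqrt{\Delta_{min}}$, we have $\phi(A) \le \Delta_{max}\,\gamma(A) \le \theta\sqrt{\phi(G)+\Delta_{min}\epsilon}$. Then use Lemma~\ref{lemma:dasgupta} to pick the Dasgupta partition $(A',B')$ with $C(A',B')/(|A'||B'|)\le 27\,cost_V(T^*)/(4n^3)$ and observe $\phi(G)\le \phi(A',B')\le n\cdot C(A',B')/(|A'||B'|)$. Passing through the same chain of inequalities as in the exact case yields
\begin{equation*}
\frac{nC(A,B)}{|A||B|} \;\le\; \theta\,\sqrt{\frac{27\,cost_V(T^*)}{4|V|^2} + \epsilon'},
\end{equation*}
where $\epsilon'$ is a rescaled version of $\epsilon$ (absorbing the $\Delta_{min}, n$ factors; this is why the theorem statement only claims a small additive $\epsilon$ rather than tracking an explicit dependence).

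To close the induction, write $|A|=pn$, $|B|=(1-p)n$ and apply the inductive hypothesis to each subtree, giving $cost_A(T_A)\le c\,|A|\log|A|\sqrt{cost_A(T_A^*)+\epsilon}$ and similarly for $B$. Since the subtrees of $T^*$ restricted to $A$ and $B$ have cost at most $cost_V(T^*)$ and contribute disjointly, the concavity inequality $p\sqrt{a}+(1-p)\sqrt{b}\le \sqrt{pa+(1-p)b}$ used in the exact proof lets us collapse the two subtree contributions into a single $\sqrt{cost_V(T^*)+\epsilon}$ term (with a $p\log p + (1-p)\log(1-p)$ correction that is absorbed into the $\log n$ factor). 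Adding the top-split contribution $nC(A,B)\le \theta\,p(1-p)\,n\sqrt{27\,cost_V(T^*)+\epsilon'}$ and collecting terms reproduces the bound $cost_V(T)\le c\,n\log n\sqrt{cost_V(T^*)+\epsilon}$.

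The main obstacle is bookkeeping: the $\epsilon$ introduced by power iteration first appears as an additive slack on the Rayleigh quotient, then gets amplified by $\Delta_{min}$, $n$, and constants as we pass from conductance to expansion and through Dasgupta's lemma. I would state the theorem with $\epsilon$ understood as the small post-rescaling quantity (so that the algorithm's raw $\epsilon$ should be chosen polynomially smaller if a specific target is needed), which keeps the proof structurally identical to Theorem~\ref{thm:evec} while maintaining the $\sqrt{cost_V(T^*)+\epsilon}$ form under induction via concavity of $\sqrt{\cdot}$. A secondary subtlety is that the inductive hypothesis must be applied with the \emph{same} $\epsilon$ on both subtrees, which is fine since the power-iteration error bound holds uniformly at every recursive call by a union bound over the $O(n)$ splits (each succeeding with constant probability from Theorem~\ref{th:approx_lambda_2_luca_lb}, boosted by repetition).
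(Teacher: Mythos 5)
Your proposal is correct and takes essentially the same route as the paper: the paper gives no standalone proof of this theorem beyond the Rayleigh-quotient lemma and the remark that ``the proof of the main theorem follows that of Theorem~\ref{thm:evec},'' which is precisely the replay-with-additive-$\epsilon$ induction you describe. If anything you are more careful than the paper, which silently ignores both the constant success probability of power iteration (your union-bound/boosting remark) and the rescaling of $\epsilon$ as it passes through the degree bounds and Dasgupta's lemma.
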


\subsection{Using Random Hyperplanes}

We show that the expected cost of using random hyperplanes to do a clustering has a provable guarantee. 

\begin{theorem}
\label{thm:rp}
For any arbitrary dot product similarity matrix $C$, with all $0\le C_{ij}\le 1$, the hierarchy produced by random hyperplane based partitioning has expected approximation of $O(n)$.
\end{theorem}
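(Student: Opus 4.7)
The plan is to give a direct proof that hinges on universal upper and lower bounds on the hierarchy cost, without requiring any detailed probabilistic analysis of the random hyperplane itself. Let $S = \sum_{\{i,j\},\ i\ne j} C_{ij}$ denote the total pairwise similarity. The strategy is to sandwich both $cost_V(T)$ and $cost_V(T^*)$ between $2S$ and $nS$, so the $O(n)$ expected approximation falls out immediately.

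For the upper bound I would observe that in any binary hierarchy $T$ over $V$, the lowest common ancestor $T[i\vee j]$ of any two distinct leaves has at most $|V|=n$ leaves beneath it. Therefore $cost_V(T) = \sum_{\{i,j\}} C_{ij}\,|leaves(T[i\vee j])| \le nS$. Since this bound holds deterministically for every realization of the random hyperplanes, it continues to hold in expectation, giving $E[cost_V(T)] \le nS$.

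For the matching lower bound on the optimum, note that in any binary tree with at least two leaves the LCA of two distinct leaves must contain at least two leaves (namely $i$ and $j$ themselves), so $cost_V(T^*) \ge 2S$. Dividing the two inequalities yields $E[cost_V(T)]/cost_V(T^*) \le n/2$, which is the claimed $O(n)$ expected approximation.

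The main observation, and what might look suspiciously cheap, is that this proof never really uses the random-hyperplane structure; it in fact shows that every binary hierarchy, random or deterministic, is within a factor of $n/2$ of the optimum. The real obstacle would appear only if one wanted to sharpen the bound beyond $O(n)$: to do so one would have to compute $E[|leaves(T[i\vee j])|]$ pair by pair using the joint distribution of the projections under the split rule, controlling how the angle between $v_i$ and $v_j$ governs the probability that the pair stays together through successive levels. No such refinement is needed for the statement of Theorem~\ref{thm:rp}, so the two-line sandwich above suffices.
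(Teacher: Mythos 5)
Your proof is correct, and it reaches the paper's stated bound by a genuinely different and more elementary route. The lower bound $cost_V(T^*)\ge 2S$ is exactly the one the paper uses (every pair with $C_{ij}>0$ has an LCA with at least two leaves). Where you differ is the upper bound: the paper actually carries out the probabilistic analysis you describe as the ``refinement'' --- it sets $\theta_{ij}=\cos^{-1}(C_{ij})$, computes $\Pr[i,j \text{ split at level } l] = \left(1-\tfrac{\theta_{ij}}{\pi}\right)^{l-1}\tfrac{\theta_{ij}}{\pi}$, charges $nC_{ij}/2^{l}$ to a pair split at level $l$ under median splitting, and sums the resulting geometric series to get $\mathbb{E}\,cost_V(T)\le \sum_{i,j} nC_{ij}\tfrac{\theta_{ij}}{\pi}\le \sum_{i,j} nC_{ij}=nS$. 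Since the paper then crudely bounds $\theta_{ij}/\pi\le 1$, its final bound collapses to your deterministic one, so your two-line sandwich proves the theorem as stated with strictly less machinery. What the paper's computation buys, and yours does not, is the intermediate inequality $\mathbb{E}\,cost_V(T)\le \sum_{i,j} nC_{ij}\theta_{ij}/\pi$, which is genuinely smaller than $nS$ when similarities are high (small angles), i.e.\ it quantifies how the random-hyperplane rule exploits geometry; your argument, as you candidly note, shows that \emph{every} binary hierarchy is an $n/2$-approximation and hence says nothing specific about \texttt{RP}. Your proof also quietly sidesteps a tension in the paper's argument: the paper simultaneously assumes median-based splitting (to get $n/2^{l}$ leaves per node at level $l$) and sign-of-projection splitting (to get the $\theta_{ij}/\pi$ probability), two rules that do not coincide; your deterministic bound needs neither assumption.
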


\begin{proof}
Let the angle between points $i$ and $j$ be denoted by $ \theta_{ij} = \cos^{-1}(C_{ij})$. Given that the partitioning vector or equivalently, the hyperplane, has a direction which is uniformly chosen, and the points $i$ and $j$ are split iff they lie on different sides of the resulting hyperplane, at any given tree node that has both $i$ and $j$, $P(i,j \: split) = \frac{\theta_{ij}}{\pi}$.
and $P(i,j \: not \: split) = 1 - \frac{\theta_{ij}}{\pi}.$
Hence,
\begin{align*}
    P(i,j  \text{ split  at  level } l) 
    &= P(i,j \text{ not split till level }l-1) \cdot P(i,j \text{ split at level } l)
   \\ &= \left(1 - \frac{\theta_{ij}}{\pi}\right)^{(l-1)}\frac{\theta_{ij}}{\pi}
\end{align*}
If we use a median based splitting strategy, then each node at level $l$ has $n/2^l$ leaves. 
Hence if the pair $(i,j)$ are split at level $l$, then the contribution of the edge is $nC_{ij}/{2^l}$. 
Thus the expected cost using a median splitting strategy is:
\begin{align*}
    \mathbb{E}~cost_V(T) & \leq \sum_{i,j} nC_{ij} \sum_{l=1}^{\log(n)}\left(1 - \frac{\theta_{ij}}{\pi}\right)^{(l-1)}\frac{\theta_{ij}}{\pi}{\left(\frac{1}{2}\right)}^{l}
\end{align*}
The term inside the second summation is a GP.
\begin{align*}
    \mathbb{E}~cost_V(T) %& \leq \sum_{i,j} 
    \leq \sum_{i,j} 
    \frac{n}{2}C_{ij}
    \frac{\theta_{ij}}{\pi} 
    \left(\frac{1 - \left(\frac{1}{2}\left(1 - \frac{\theta_{ij}}{\pi}\right)\right)^{(\log(n)-1)}}{1 - \frac{1}{2}\left(1 - \frac{\theta_{ij}}{\pi}\right)}\right) \\
    \leq \sum_{i,j} \frac{n}{2}C_{ij} \frac{\theta_{ij}}{\pi}2 \leq \sum_{i,j} nC_{ij} 
    \quad (As \: 0 \leq \theta_{ij} \leq \pi)
\end{align*}

Now, we lower bound the optimal cost, $cost_V(T^*)$. There can be two cases, (1) $C_{ij} = 0$, and (2) $C_{ij} > 0$. For the second case $|leaves(T[i \lor j])| >= 2$. So, $cost_V(T^*) \geq 2\sum_{ij} C_{ij}$, where $T^*$ is the optimal tree. Thus,

$$\frac{\mathbb{E}cost_{V}(T)}{cost_{V}(T^*)} \leq \frac{\sum_{i,j} nC_{ij}}{ \sum_{i,j} 2C_{ij}} = \frac{n\sum_{i,j}C_{i,j}}{2 \sum_{i,j}C_{i,j}} = \frac{n}{2}$$

\end{proof}

\subsection{Time Complexities v/s Cost}
\label{sec:hccost}
In this section, we compare the time complexities v/s cost for various hierarchical clustering algorithms. We propose a hierarchical clustering algorithm with 4 splitting rules, namely \EV, \ASPD, \texttt{RP} and \texttt{2-means}. For \EV and \ASPD, the time to compute the eigenvector (or approximate eigenvector) is $O(nd^2)$ when $d \leq n$. The time to perform \texttt{RP} and \texttt{2-means} is $O(nd)$. Since we build a binary tree, the total building time is $O(nd^2\log n)$ for \EV and \ASPD. Also, the time for \texttt{2-means} and \texttt{RP} based hierarchical clustering is $O(nd\log n)$. The cost for \EV is given by Theorem~\ref{thm:evec}, for \ASPD is given by Theorem~\ref{thm:approxev} and for \texttt{RP} by Theorem~\ref{thm:rp}. Note that we do not provide any cost bounds for \texttt{2-means} based partitioning.
Though \texttt{ARV} and \texttt{LR} provide exciting bounds~\cite{Dasgupta:2016}, their build time is quite high due to the use of semi-definite programming and flows. This makes them impractical for our use case. Similarly, linkages do not scale well to large data as shown in Section~\ref{sec:evalNN}. The build times, query times and cost are summarized in Table~\ref{tab:quality}. \texttt{EV}, \texttt{AEV}, \texttt{RP}, and \texttt{2-means} have query time of $O(d\log n)$, and \texttt{Linkages}, \texttt{ARV} and \texttt{LR} based algorithms are not queryable.

\begin{table}[htb]
    \centering
    \caption{Comparison of Hierarchical Clustering Times and Quality. $\left(c \le {\Delta_{max} \sqrt{{27}/{\Delta_{min}}}}\right)$.}
    \label{tab:quality}
   
    \begin{tabular}{|c|c|c|c@{}|}
        \hline
        & \textbf{Build Time} & \textbf{Query Time} & \textbf{Cost} \\
        \hline
         \textbf{EV} & $O(nd^2 \log n)$ & $O(d\log n)$& $O(cn \log n \sqrt{OPT})$ \\
         \textbf{AEV} & $O(nd^2 \log n)$ & $O(d\log n)$& $O(cn \log n \sqrt{OPT + \epsilon})$ \\
         \textbf{RP} & $O(nd \log n)$ & $O(d\log n)$& $O(n OPT)$ \\
         \textbf{2-means} & $O(nd \log n)$ & $O(d\log n)$& - \\
         \hline
         \textbf{ARV} & $O(n^2\log^2n)$ & - & $O(\sqrt{\log n} OPT)$ \\
         \textbf{LR} & $O(n^3\log n)$ & - & $O({\log n} OPT)$ \\
         \textbf{Linkages} & $O(n^2\log n)$ & - & -\\
         \hline
    \end{tabular}
\end{table}

\section{Experiments}

\begin{figure*}
    \centering
    \includegraphics[width=\linewidth]{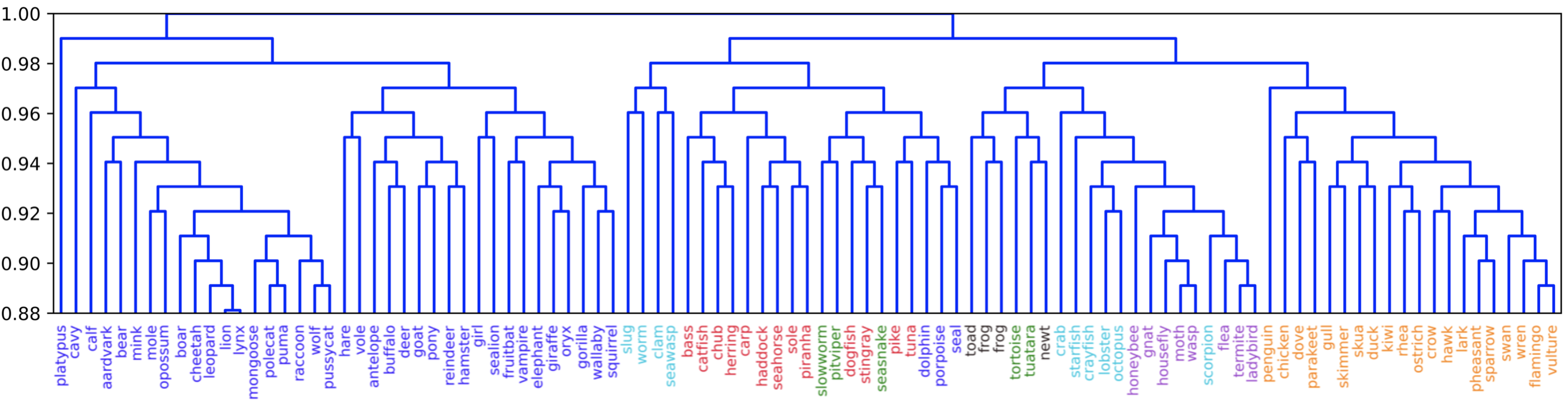}
    \caption{Hierarchy for Zoo dataset using Eigenvector based Hierarchical Clustering. The color of the label is an indicator of the ground truth class.}
    \label{fig:zoo}
\end{figure*}
In this section, we evaluate the performance of the our proposed solution (henceforth, Hierarchical Tree Based Nearest Neighbor,  or \HTNN) with four splitting rules described above. To benchmark our performance, we pick state of the art algorithms applied to three open source datasets and a spam dataset from one of the biggest social networking sites. The open source datasets are \ALOI \cite{rocha2014multiclass} (Dimensions-128, Classes=1k, Train size=88k, Test size=20k), \News\cite{Dua:2017} (Dimensions=100, Classes=20, Train size=11.3k, Test size = 7.5k), \CT\cite{Dua:2017} (GloVe embedding of dimensionality 100, Classes = 7, Train size = 480k, Test size = 100k). 

The spam dataset used is a \LComments dataset (GloVe embedding of dimensionality 50, Train Size = 100k, Test Size = 20k) which consists of text comments on the social networking website's home feed. This dataset has 3 labels, namely, \Spam (Content which is offensive, harmful, abusive or disruptive. These comments violate the Terms and Conditions of the networking site and are to be completely removed from the site), \texttt{Low-Quality} (\LQ)~ (Content which is irrelevant to the discussion or unappealing, and \Clear~ (Content which receives unrestricted distribution on the site). In our observations, \Spam~ and \LQ~ categories contain very similar content with fairly minor differences which leads to a difficult classification problem.

Since \LComments and \News\footnote{SVM on 20NG with ~20k dimensional tf-idf vectors gives f1-score as 92\%. However, it is infeasible to use such high dimensions for vector based strategies such as ours.} are textual data, each document is converted to a vector. Each word is converted using pretrained word GloVe embeddings\cite{pennington2014glove}. The average of these embeddings are taken to generate the document vector. 50-dimensional and 100-dimensional embeddings were used for \LComments and \News respectively. \footnote{For practical purposes, we take a balanced partition by sampling a split point such that each side has between $(n/3, 2n/3)$ points to ensure a logarithmic query time.}

We also use the \texttt{MNIST}~\cite{lecun1998gradient} dataset for the Hierarchical Clustering Cost experiment discussed earlier in Section~\ref{sec:hccost} and the \texttt{People} dataset from one of the biggest social networking sites for Scalability (Section~\ref{sec:scale}) and Purity (Section~\ref{sec:purity}) experiments. The \texttt{MNIST} dataset is of size $60000 \times 784$, and the

\texttt{People} dataset is of size 40M ($40 \times 10^6$), which is a subsample of the users data on a social networking site.
Each of these members are represented as 50 dimensional vectors. We tie up our experiments in Section~\ref{sec:connecting}. {In Section~\ref{sec:comments}, we discuss the need for accurate predictions in our use-case, i.e., in the \LComments dataset.}

\subsection{Visual Demonstration on Zoo Dataset} 
Using the Zoo~\cite{Dua:2017} dataset, we build a hierarchy using the Hierarchical Clustering proposed in Algorithm~\ref{alg:metahierarchy}. The Zoo dataset consists of 101 animals described with 16 features/attributes and classified into 7 clusters. We use the \EV based splitting rule. The resulting hierarchical clustering is demonstrated in Figure~\ref{fig:zoo}. The labels on the x-axis denote the animal and the color of the label denotes the ground truth cluster. Since ours is an approximation algorithm, we do expect and observe some misclustering. On close observation, we notice that some of the misclustering might be due to the properties of the animals as well. For example, we see that pitviper, seasnake, dolphin and seal are misclassified and are clubbed with aquatic animals.

\subsection{Nearest Neighbor Classification using Hierarchies}
\label{sec:evalNN}
In this task, we use our proposed hierarchical clustering method for the nearest neighbor classification task. We use the tree to find the approximate NN of the query point by traversing the tree to narrow down the candidate set size and doing a brute force search on a small number of points. The pseudocode for the querying and classification mechanism is shown in Algorithm~\ref{alg:metahierarchy}\footnote{Note that though we perform brute force search at the leaf, other efficient methods of classification can equivalently be used inside one cluster/leaf.}.

\begin{table*}[htb!]
\centering
\caption{Performance in terms of Precision(P), Recall(R), F1-Score(F) and Build Time(BT) in seconds, Query Time(QT) in milliseconds per query on 20NG and ALOI datasets (LSH is with $k=8$ and $L = 20$). We highlight the best performance in terms of the F1-Score (F).}
\label{tab:quality_all}
\begin{tabular}{|c|c||ccc|cc||ccc|cc|} 
\hline
                  &                    & \multicolumn{5}{c||}{\textbf{20NG} }               & \multicolumn{5}{c|}{\textbf{ALOI} }                    \\
                  & \textbf{Methods}   & \textbf{P} & \textbf{R} & \textbf{F} & \textbf{BT} & \textbf{QT}  & \textbf{P}  & \textbf{R}  & \textbf{F}  & \textbf{BT} & \textbf{QT}   \\ 
\hline
\multirow{4}{*}{\rotatebox[origin=c]{90}{\textbf{HTNN}}} & \textbf{EV}        & \textbf{0.740}      & \textbf{0.766}       & \textbf{0.752}  &  4.27   &    \textbf{0.59}     & 0.893       & 0.889       & 0.891    & 77.36   & 0.70         \\
                  & \textbf{AEV}       & 0.712      & 0.732      & 0.722      & 7.09 &   0.60      & 0.893       & 0.890       & 0.891    & 70.79   & 0.67          \\
                  
                  & \textbf{RP}        & 0.66      & 0.57      & 0.617    & 1.56  &    0.60     & 0.776       & 0.768       & 0.772   & 37.88  & \textbf{0.55}         \\
                  & \textbf{2-means}   & 0.438     & 0.434      & 0.436    &  19.66 & 0.76        & 0.892       & 0.889       & 0.890    & 183.20   & 0.97          \\ 

\hline\hline
\multirow{4}{*}{\rotatebox[origin=c]{90}{\textbf{LT}}} & \textbf{Average}   & 0.05      & 0.99     & 0.096  & 23.64      &  10.05    & 0.036       & 0.670       & 0.069       & $\sim$3.06k & $\sim$2k      \\
                  & \textbf{Single}    & 0.05      & 0.99      & 0.096      &  15.4  &  10.62     & 0.001       & 0.971       & 0.002   & $\sim$3.06k & $\sim$2k      \\
                  & \textbf{Complete}  & 0.0809      & 0.397      & 0.135 &  23.74      &  7.70     & 0.124       & 0.519       & 0.200       & $\sim$3.06k & $\sim$2k      \\
                  & \textbf{Ward}      & 0.172      & 0.219      & 0.191      & 23.91  &  7.6    & 0.393       & 0.562       & 0.463       & $\sim$3.06k & $\sim$2k      \\ 
\hline\hline
                  & \textbf{SVM}       & 0.613       & 0.608       & 0.610   & 30.16    & 2.06            & 0.843       & 0.830       & 0.836       & -  & -           \\ 
 
\hline\hline

\multirow{4}{*}{\rotatebox[origin=c]{90}{\textbf{ANN}}} & \textbf{LSH}       & 0.559      & 0.552      & 0.555  &  \textbf{0.22}   &    2.14      & \textbf{0.936 }      & \textbf{0.92}        & \textbf{0.934 }   & \textbf{1.02}   & 0.68          \\
             
                & \textbf{kd Tree} & 0.564 & 0.555 & 0.560 & 0.325 & 264 & 0.907 & 0.904 & 0.905 & 4.39 & 421\\
                  & \textbf{kd Tree (w/o}  & \multirow{2}{*}{0.388}      & \multirow{2}{*}{0.37}      & \multirow{2}{*}{0.383}    & \multirow{2}{*}{0.32}  & \multirow{2}{*}{0.89}       &    \multirow{2}{*}{0.825}    & \multirow{2}{*}{0.816}       & \multirow{2}{*}{0.820}   & \multirow{2}{*}{4.37}    & \multirow{2}{*}{7.91}         \\
                  & \textbf{backtrack)} & & & & & & & & & &\\
                  
\hline \hline
& \textbf{PERCH} & 0.047 & 0.0796 & 0.059 & 129.6 & 9.63 & 2.16E-05 & 2.52E-3 & 4.27E-05 & 437.4 & 2.09 \\
\hline
\end{tabular}
\end{table*}
\begin{table*}
\caption{Performance in terms of Precision(P), Recall(R), F1-Score(F) and Build Time (BT) in seconds, Query Time(QT) in milliseconds per query on Covertype and Comments datasets (LSH is with $k=8$ and $L = 20$). We highlight the best performance in terms of the F1-Score (F).}
\label{tab:quality_comments}
\begin{tabular}{|c|c||ccc|cc||ccc|ccc|} 
\hline
                  & \multicolumn{1}{c||}{} & \multicolumn{5}{c|}{\textbf{Covertype}}               & \multicolumn{6}{c|}{\textbf{Comments}}                                                                           \\ 
\hline
                  &                       & \multicolumn{5}{c||}{}                                 & \multicolumn{3}{c|}{ \textbf{Clear vs LQ+Spam}} & \multicolumn{3}{c|}{\textbf{Clear+LQ vs Spam} }  \\
                  & \textbf{Method}      & \textbf{P}  & \textbf{R}  & \textbf{F} & \textbf{BT} & \textbf{QT}  & \textbf{P} & \textbf{R} & \textbf{F}                   & \textbf{P} & \textbf{R} & \textbf{F}                    \\ 
\hline
\multirow{4}{*}{\rotatebox[origin=c]{90}{\textbf{HTNN}}} & \textbf{EV}           & 0.925       & 0.921       & 0.923     & 1330 & 0.38         & \textbf{0.75}       & \textbf{0.84}       & \textbf{0.8}                          & \textbf{0.73}       & \textbf{0.74}       & \textbf{0.74}                          \\
                  & \textbf{AEV}          & 0.925       & 0.922       & 0.923   & 1106   & \textbf{0.37}         & 0.76       & 0.84       & 0.8                          & 0.73       & 0.74       & 0.74                          \\
                  & \textbf{RP}           & 0.904       & 0.897       & 0.901   & 1106   & 0.37         &    0.68       &   0.85        & 0.75                            & 0.61          & 0.55          & 0.58                             \\
                  & \textbf{2-means}      & \textbf{0.930}       & \textbf{0.930}       & \textbf{0.930}   & 1621   & 0.81         & 0.63       & 0.61       & 0.62                         & 0.64       & 0.70       & 0.67                          \\ 
\hline
\hline
                  & \textbf{SVM}          & 0.211       & 0.308       & 0.250   & -   & -            & 0.65          & 0.99          & 0.78                            & 0.735          & 0.431          & 0.54                            \\ 
\hline
\hline
\multirow{4}{*}{\rotatebox[origin=c]{90}{\textbf{ANN}}} & \textbf{LSH}          & 0.581       & 0.739       & 0.650    & \textbf{1.192}  & 44.72        & 0.71       & 0.74       & 0.73                         & 0.62       & 0.63       & 0.62                          \\
                %   & \textbf{kd Tree}     & 0.942       & 0.935       & 0.938   & 3.98   & 0.76         & -          & -          & -                            & -          & -          & -                             \\
                & \textbf{kd Tree} & 0.951 & 0.908 & 0.929 & 14.1527 & 2.57 & 0.69 & 0.88 & 0.77 & 0.706 & 0.560 & 0.62 \\
                  & \textbf{kd Tree (w/o}     & \multirow{2}{*}{0.890}       & \multirow{2}{*}{0.837}       & \multirow{2}{*}{0.863}   &
                  \multirow{2}{*}{14.25}   & \multirow{2}{*}{0.6796}         & \multirow{2}{*}{0.69}          & \multirow{2}{*}{0.79}          & \multirow{2}{*}{0.74}    & \multirow{2}{*}{0.64}          & \multirow{2}{*}{0.48}          & \multirow{2}{*}{0.55}                             \\
                  & \textbf{backtrack)} & & & & & & & & & & & \\

                  \hline \hline
& \textbf{PERCH} & 0.0696 & 0.1428 & 0.093 & 882 & 0.865 & 0.65 & 0.98 & 0.78 & 0.41 & 0.02 & 0.04 \\
\hline
\end{tabular}
\end{table*}

We report our results on the datasets described above for two types of metrics. First, \emph{classification performance}, is classification analysis which includes macro-averaged Precision, Recall, and F1 Scores on the datasets. Second, system performance, compares system time taken for building and querying the models. The \LComments classification task has specific business requirements that require definition of two binary sub-problems for the dataset. First subtask, \Clear v/s (\LQ+\Spam), determines whether a comment should receive unrestricted distribution on the site. Second subtask, (\Clear+\LQ)  v/s \Spam identifies spam comments that should be blocked. %removed from the site.
We compare performance of the following algorithms:
\label{sec:algolist}
\label{sec:evalAlgo}
\begin{enumerate}[labelindent=0pt]
    \item Our proposed methods, Hierarchical Tree Based NN (\HTNN) with splitting rules: \texttt{EV} (exact eigenvector), \ASPD~ (approximate eigenvector), \texttt{RP} (random plane) and \texttt{2-means} (k-means, $k=2$).
    % \begin{enumerate}
    %     \item \texttt{EV} Tree: The tree is built by using the exact eigenvector computed using SVD.
    %     \item \ASPD~ Tree: The tree is built by selecting the hyperplane as described in Algorithm~\ref{alg:APD}.
    %     \item \texttt{RP} Tree: The tree is built by taking a random plane.
    %     \item \texttt{2-means} Tree: At each internal node, a k-means with 2 centers is done recursively.
    % \end{enumerate}
    \item Linkage based trees (\LT): Single, Average, Complete and Ward Linkages which minimize minimum, average, maximum and the merged variance between clusters respectively. %Unlike other tree based data-structures mentioned here, where a query point traverses down the tree, 
    There is no direct way to query this structure as no distance information at the internal nodes can be used. %In these bottom-up clustering mechanisms, we build the tree along with the \Test~ set. 
    The tree is cut-off when there are as many clusters as class labels. Each cluster is taken to be as one class. %As this is an unsupervised method, the true class labels might be different from the predicted label. Hence, 
    We take pairs of points from the \Test set and then %measure (using precision/recall/F1-Score) whether 
    check if the algorithm accurately predicts the pairs of point to be from the same class.
    \item SVM: The support vector machine classifier with the best kernel among (\texttt{polynomial}, \texttt{rbf}, \texttt{sigmoid}).
    \item Approximate Nearest Neighbor Structures (\ANN): The standard kd Tree~\cite{Bentley:1975:MBS:361002.361007} and multiprobe \LSH~\cite{Indyk:1998:ANN:276698.276876}\footnote{\LSH is not a tree based data structure. Tree variants of \LSH exist, but it is unclear how to merge and interpret multiple trees.} used for approximate nearest neighbor searches. For kd-trees, we use standard kd-tree querying mechanism. We also evaluate performance of the kd-tree with querying mechanism as in Algorithm~\ref{alg:metahierarchy}. % For LSH, we use the multi-probe strategy with 2 probes per table. The chosen $k$ which gave the best performance was $10$. The \texttt{falconn} library\cite{DBLP:journals/corr/AndoniILRS15} function \texttt{LSHConstructionParameters} gives us an initial estimate for the \LSH~ parameters.
\item PERCH: An Online Hierarchical Clustering algorithm~\cite{kobren2017hierarchical} that maintains a boundary box at every internal node and inserts new points to nodes based on the distances to these boxes.\footnote{PERCH quality is not reported in terms of precision and recall.} We do not evaluate BIRCH~\cite{BIRCH} and BICO~\cite{BICO} as PERCH~\cite{kobren2017hierarchical} is shown to outperform them.
\end{enumerate}

{\subsubsection{Discussion on the Comments Dataset}
\label{sec:comments}
For the \LComments dataset, we treat the problem as two binary classification problems. The first-- \Clear v/s \LQ + \Spam, aims to detect which comments should receive restricted distribution on the site. The second-- \Clear + \LQ v/s \Spam, aims to detect which comments should be completely blocked from the website and should not be visible at all. Let us consider the following cases:
\begin{itemize}
    \item \textbf{\Spam comment marked as \Clear or \LQ}: This case can occur if there is a mis-classification in the second binary classification problem (\Clear + \LQ v/s \Spam). A low quality comment receives restricted access, i.e., only connections of the user who generated the content are exposed to the spam content. A clear comment may be distributed throughout the site. In such a case, many users may be exposed to this spam comment. This impacts the user experience and the site in a negative manner and is undesirable.
    \item \textbf{\LQ comment marked as \Spam}: In this case, a low-quality comment is blocked and not visible to any user. The user who posted the content may feel outraged since the comment is not violating the terms and conditions. This results in a bad user experience, again impacting the site. This case is similar to when a \Clear comment is marked as \Spam.
    \item \textbf{\LQ comment marked as \Clear}: In this case, a low-quality comment is distributed across the site and shown to users who may not be interested in them. For example, if a personal photograph is treated as low quality, people beyond the user's connections may not be interested in it. They may feel that the site is showing irrelevant content and will lose interest in the platform, resulting in negative impact to the site.
\end{itemize}
}
{We treat this problem as a combination of two binary classification problems as it reduces the negative impact to the site and it's reputation. If a \Spam comment is marked as \LQ (the first subproblem), it receives restricted distribution and impacts lesser people as compared to the case when it is given unrestricted distribution. Similarly, marking \LQ as \Clear (the second subproblem) also has lesser negative impact. People may view content as uninteresting but do not have the experience of getting perfectly legitimate content blocked.}

\subsubsection{Results}
In this section, we summarize performance comparisons on the \ALOI, \News datasets (Table~\ref{tab:quality_all}) and \CT, \LComments datasets (Table~\ref{tab:quality_comments}). 

\emph{Classification and Performance Analysis :} Here, we compare performance of the above algorithms in terms of their classification and build/run-time performances. We report performance numbers on \Test set averaged over five runs. From Table~\ref{tab:quality_all}  we see that precision, recall and F1-score of \HTNN is better than \LT on all tasks and performs better than \ANN for three of four datasets. Also, \HTNN~ performs better than \SVM~ on \News, \ALOI and \LComments but for \CT dataset the \SVM algorithm did not converge. \LSH, though has low building time, it does not maintain any hierarchy and is unsuitable for the \News and \LComments datasets. Both of these have an inbuilt hierarchical structure. kd trees demonstrate comparable performance for \ALOI and \CT for the backtracking kd querying mechanism. However, the query times for these are significantly higher as compared to the \HTNN algorithms. To reduce query time for kd tree, we perform querying without backtracking. This experiment shows that the quality degrades and kd trees have highly unbalanced trees in case of sparse data (i.e., \ALOI). As the number of data points increases, memory usage of \LT based data structures becomes unfeasible as observed on the \CT dataset. Query time for \SVM and \LT are also too large to feature in Table~\ref{tab:quality_comments}. {\PERCH performs poorly on all open-source datasets as it builds a highly unbalanced hierarchy where most data points are labelled to be of one class.} All differences in performance are statistically significantly different with five repetitions of all experiments (p<0.05).

\emph{\Clear v/s (\LQ+\Spam)} and \emph{(\Clear+\LQ) v/s \Spam Tasks:} These tasks are binary sub-problems on \LComments dataset, which are important for social network. The (\Clear+\LQ) v/s \Spam task is harder because there is significant overlap between \LQ and \Spam categories. For both tasks, \HTNN outperforms \SVM and \ANN algorithms. For \LComments dataset \LT based data structures are not shown because of time and space consumption which makes them infeasible for our application.
For \PERCH and \SVM, all points are labelled as one class, and thus yield high recall.

%\begin{table}
%\centering
%\caption{Test of Significance with AEV}
%\label{tab:tos}
%\begin{tabular}{c|cccc} 
%\hline
%               & \textbf{SVM}   & \textbf{k-d Tree} & \textbf{LSH} & \textbf{p-value} % \\
%\hline
%\textbf{ALOI}  & Reject         & Reject          & Reject      & 0.05              \\
%\textbf{20NG} & Reject & Reject          & Reject      & 0.05              \\
%\hline
%\end{tabular}
%\end{table}

% \subsubsection{Test of significance}
% We perform the test of significance using \texttt{scipy} librarys in-built function \texttt{ttest}$\_$\texttt{ind}. We run the \ASPD, \SVM, \texttt{kd-tree} and \LSH~ algorithms five times and compare the results for them in Table~\ref{tab:tos}.

% Let $H_0$ be the hypothesis that the F1-Score for the two algorithms come from the same distribution. For all datasets, we are able to reject the null hypothesis with a $p$-value of 0.05.

\subsection{Anomaly Detection using Hierarchies}
\label{sec:anomaly}

As we mentioned earlier in this paper, a key requirement for our application is to adapt quickly to newly emerging content types. To test the adaptability of our solution, we perform a series of simulations, wherein we selectively drop content from randomly selected subset of classes during training. At test time, we include data from all classes. The Anomaly Detection task is to first decide whether a data point belongs to a known class or to a new \emph{anomaly} class. Post this decision, a label is obtained (in real world, from a human editor) and this data is included in training data and the classifier is updated or retrained. 

In this section, we show that \HTNN with \ASPD based partitioning scheme outperforms in the sense of achieving same or better precision and recall with fewer additional labeled data points.

\begin{table}[htb]
\centering
\caption{Performance in terms of Precision(P), Recall(R), F1-Score(F) for the Anomaly Detection Task on \ALOI~ dataset with \% of points manually labelled. We highlight the best performance in terms of the F1-Score (F).}
\label{tab:anomaly}
\begin{tabular}{|c|cccc|cccc|}
\hline
                           & \multicolumn{4}{c|}{ \textbf{AEV }}                
                           & \multicolumn{4}{c|}{ \textbf{SVM}}                                                     \\
                           
                           & \textbf{\%} & \textbf{P} & \textbf{R} & \textbf{F} & \textbf{\%} & \textbf{P} & \textbf{R} & \textbf{F}  \\
                           \hline \hline
\multirow{3}{*}{\rotatebox[origin=c]{90}{\textbf{ALOI-1}}} & \textbf{51}               & \textbf{0.53}               & \textbf{0.78}            & \textbf{0.63}             & \textbf{68}                      & \textbf{0.42}               & \textbf{0.82}            & \textbf{0.55}              \\ 
                           & 40               & 0.58               & 0.66            & 0.61             & 62                      & 0.42               & 0.76            & 0.54              \\
                           & 30               & 0.62               & 0.53            & 0.57             & 46                       & 0.42               & 0.56            & 0.48              \\
                           \hline \hline
\multirow{3}{*}{\rotatebox[origin=c]{90}{\textbf{ALOI-2}}} & \textbf{51}               & \textbf{0.53}               & \textbf{0.80}            & \textbf{0.64}             & 68                      & 0.42               & 0.84            & 0.56              \\
                           & 40               & 0.58               & 0.66            & 0.62             & \textbf{63}                      & \textbf{0.44}               & \textbf{0.80}            & \textbf{0.56}              \\
                           & 30               & 0.63               & 0.54            & 0.58             & 47                      & 0.45               & 0.61            & 0.52              \\
\hline \hline
\multirow{3}{*}{\rotatebox[origin=c]{90}{\textbf{ALOI-3}}} & \textbf{51}               & \textbf{0.52}               & \textbf{0.78}            & \textbf{0.62}             & 69                      & 0.42               & 0.86            & 0.57              \\
                           & 40               & 0.57               & 0.67            & 62             & \textbf{63}                      & \textbf{0.43}               & \textbf{0.78}            & \textbf{0.55}              \\
                           & 30               & 0.61               & 0.53            & 0.57            & 48                       & 0.43               & 0.60            & 0.50    \\
                           \hline
                         
\end{tabular}
\end{table}
\begin{table}[h]
\centering
\caption{Performance in terms of Precision(P), Recall(R), F1-Score(F) for the Anomaly Detection Task on \News~ dataset with \% of points manually labelled }
\label{tab:newsanamoly}
\begin{tabular}{|c|cccc|cccc|}
\hline
     & \multicolumn{4}{c|}{ \textbf{AEV }} & \multicolumn{4}{c|}{ \textbf{SVM}} \\
            & \textbf{\%}    & \textbf{P}    & \textbf{R}    & \textbf{F}    & \textbf{\%}    & \textbf{P}    & \textbf{R}    & \textbf{F}    \\
\hline \hline
\multirow{3}{*}{\rotatebox[origin=c]{90}{\textbf{20NG-1}}} 
      & 11 & 0.75 & 0.15 & 0.24 & 12 & 0.63 & 0.13 & 0.22 \\
      & 19 & 0.75 & 0.25 & 0.38 & 21 & 0.65 & 0.24 & 0.35 \\
      & \textbf{39} & \textbf{0.71} & \textbf{0.48} & \textbf{0.57} & 42 & 0.65 & 0.48 & 0.55 \\
      \hline \hline
\multirow{3}{*}{\rotatebox[origin=c]{90}{\textbf{20NG-2}}}
      & 10 & 0.70 & 0.12 & 0.21 & 10 & 0.67 & 0.12 & 0.20 \\
      & 18 & 0.70 & 0.22 & 0.34 & 18 & 0.66 & 0.21 & 0.32 \\
      & \textbf{38} & \textbf{0.69} & \textbf{0.47} & \textbf{0.56} & 37 & 0.66 & 0.43 & 0.52 \\
      \hline
\end{tabular}
\end{table}
\begin{table}[htb]
\centering
\caption{Performance in terms of Precision(P), Recall(R), F1-Score(F) for the Anomaly Detection Task on \News~ dataset with \% of points manually labelled, using subclass-superclass relation. We highlight the best performance in terms of the F1-Score (F).}
\label{tab:anomalyHierarchy}
\begin{tabular}{|c|cccc|cccc|}
\hline
      & \multicolumn{4}{c|}{ \textbf{AEV }} & \multicolumn{4}{c|}{ \textbf{SVM}} \\
      & \textbf{\%}   & \textbf{P}    & \textbf{R}     & \textbf{F}     & \textbf{\%}   & \textbf{P}    & \textbf{R}     & \textbf{F}     \\
      \hline \hline
\multirow{3}{*}{\rotatebox[origin=c]{90}{\textbf{20NG-1}}} & 
      11 & 0.93 & 0.66 & 0.77 & 12 & 0.90 & 0.68 & 0.77 \\
& 19 & 0.89 & 0.72 & 0.80 & 21 & 0.84 & 0.72 & 0.78 \\
& \textbf{38} & \textbf{0.81} & \textbf{0.83} & \textbf{0.82} & 42 & 0.76 & 0.82 & 0.79 \\
\hline \hline
\multirow{3}{*}{\rotatebox[origin=c]{90}{\textbf{20NG-2}}} 
& 10 & 0.89 & 0.43 & 0.58 & 10 & 0.86 & 0.37 & 0.52 \\
& 18 & 0.84 & 0.50 & 0.63 & 18 & 0.80 & 0.42 & 0.55 \\
& \textbf{38} & \textbf{0.76} & \textbf{0.68} & \textbf{0.72} & 37 & 0.72 & 0.57 & 0.64 \\
\hline
\end{tabular}
\end{table}
\subsubsection{Data Generation}
We generate 3 datasets from \ALOI and 2 datasets from the \News datasets. For the sake of clarity, we will call the \ALOI generated datasets as ALOI-1, ALOI-2 and ALOI-3. The \News generated datasets will be 20NG-1 and 20NG-2. For the \ALOI datasets, we separate out 5 of 1000 classes to be the anomaly, and from the \News dataset, we separate out 5 of 20 classes. Test set contains of all points from the unseen classes and some points from seen classes. All sampling is done uniformly at random without replacement. ALOI-1, ALOI-2 and ALOI-3 have Train Size $\sim$ 93k, and Test Size $\sim$ 15k, and 20NG-1 and 20NG-2 have Train Size $\sim$ 11k and Test Size $\sim$ 8k.

We also generate 20NG-3 dataset which is used for manual review. In this dataset, we separate 2 classes from 20 classes. This dataset is solely used for gauging the speed of manual reviewing.

%  We generate 3 datasets from the \ALOI~ and the \News~ datasets. From the \ALOI~ dataset, we separate out 50 of 1k classes to be the anomaly, and from the \News~ dataset, we separate 5 of 20 classes. In the test set, points from seen classes are also sampled. The train set for the \ALOI~ sub-datasets have $\sim$ 93k points and the \News~ sub-datasets have $\sim$ 11k points. The test sets have $\sim$ 15k and $\sim$ 8k points respectively.

% Since \ALOI~ has 1k classes, we separate out 50 classes from the dataset. We also sample points from other classes. We create 3 such test sets, each with about 5k points from the unseen classes and 10k points from seen classes. The train set consists of $\sim$ 93k points. We do a similar process for the \texttt{20 Newsgroups} dataset. 2 datasets are created where, in each of them, 5 classes are separated out. The test set has $\sim$ 8k points where $\sim$4.5k points are from unseen classes and $\sim$3.5k points from seen classes.

\subsubsection{Algorithm Outline}
In this task, we use our hierarchical clustering method for anomaly detection. We build the hierarchy using the Algorithm~\ref{alg:metahierarchy}. We follow a modified version of the classification mechanism used in the Nearest Neighbor Classification task. The two main modifications are -- i) Average pairwise distance for each \textit{known/seen} class is computed and maintained in a table $T$, ii) We calculate the average distance of the nearest neighbors returned through the querying mechanism with the new point (say $d_1$), we look up the predict class in the table $T$ and retrieve the distance $d_2$, iii) For a new point, if $d_1 > d_2$ by a substantial amount (threshold), the point is marked as a possible anomaly and sent for manual review.   

For the baseline, we use \SVM and the prediction probabilities. The intuition behind this is, for any new point, if it belongs to a seen class, the class should be predicted with high probability. If the point is from some unseen or new class, it will be predicted with low probability. Thus, for the baseline, we obtain the prediction probabilities for each class. If the highest probability among all probabilities is fairly low (based on some threshold), the point is marked as a possible anomaly.

All possible anomalies have to be manually labelled. This labelling is associated with employing human labelers and in turn, incurs high cost. Humans also take much more time and thus, the company would need to employ many manual labelers to gain high accuracy. Thus, we vary the thresholds to obtain \% which denotes the percentage of points that are sent for manual labelling. This \% is an indicator of the manual labelling cost or budget for the company. Higher the \%, higher will be the cost to the company.

\subsubsection{Results}
In Table~\ref{tab:anomaly}, we present results on ALOI-1, ALOI-2 and ALOI-3 datasets. As mentioned, \% denotes the percentage of new points sent for manual labelling. This is an indicator of the cost for manual labelling. From Table~\ref{tab:anomaly}, we observe that with lower \% or lower additional labeling cost, we obtain higher F1-Score using \ASPD~ based algorithm as compared to the \SVM based algorithm. \ASPD~ based algorithm, with 30\% of manual labelling, performs similar to \SVM~ based algorithm with 68\% of manual labelling. This trend is observed for all \ALOI generated datasets, i.e., ALOI-1, ALOI-2 and ALOI-3. This indicates that using \ASPD~ based algorithm should perform well in practice.

As we mentioned earlier, for newly discovered content classes, our application allows for applying the treatment of the parent class in the hierarchy. We perform another experiment that exploits the inherent \texttt{hierarchy} in the \News~ dataset. Here, we first present the results on the original algorithm in Table~\ref{tab:newsanamoly}. We then modify the algorithm to consider the hierarchy as follows-- if the superclass of an \textit{unseen}/\textit{anomaly} class is predicted correctly, we treat it as a correct classification. The reason for this is, in the social networking site application for example if a new type of \Spam~ arrives and if it is correctly marked as \Spam~, the site behavior is maintained as per requirement. We observe that the F1-Scores are better in case of the \ASPD~ based algorithm as compared to the \SVM~ based algorithm (see Table~\ref{tab:anomalyHierarchy}). In this experiment, the \ASPD~ based algorithm with 20\% of labelling performs similar to \SVM~ based algorithm with $~$40\% labelling. \ASPD~ based algorithm thus performs well for data with high overlap and an inherent hierarchical structure like the 20NG data.

\subsubsection{Manual review of Anomalous Content}
\label{sec:manual}
For editors, having a hierarchy makes it easier to label new classes. In order to demonstrate this, we performed an experiment with experienced labelers at the social networking site. We took \texttt{20 Newsgroups} dataset and separated 2 classes, namely \texttt{sci.electronics} and \texttt{politics-misc} (i.e., 20NG-3), sampled some points from each of the 20 classes (18 known classes and 2 new classes). In one experiment, we gave labeler set 1 a list of all known classes and the option to put it in a new class. For labeler set 2, we give them the hierarchy of classes as well. We measured the time taken to label the points. On average, the first set of labelers, without the hierarchy, took about \textbf{141 seconds per document}. The second set, with the hierarchy, took only \textbf{100 seconds per document}.

\subsection{Scalability}
\label{sec:scale}

The \EV, \ASPD, \texttt{RP} and \texttt{2-means} algorithms are highly scalable and can leverage \emph{Apache Spark} and the \emph{mllib} library. The core operations of the proposed hierarchical clustering algorithms -- {\em i) dot products, ii) matrix multiplication, iii) eigenvector estimation and, iv) k-means algorithm.} The first two operations and the k-means algorithm are directly supported and parallelizable through the \emph{Spark mllib} library. Eigenvector estimation can be done either via matrix multiplications or through using Singular Value Decomposition from the library. 

We implemented the proposed hierarchical clustering algorithm with \ASPD~ and \texttt{RP} splitting rules in Java Spark and use it to cluster fairly big data. We sub-sample \textbf{40M} = $40 \times 10^6$ records from the People dataset of LinkedIn and use it as the training data. The build time using the \ASPD~ algorithm was $\sim$ 3 hours, where each record was a $50$ dimensional point and the tree was built such that there were 16384 leaf clusters. Using the \texttt{RP} based splitting rule, the build time was $\sim$ 2 hour. For both these evaluations, we used 30 executors, each having 1 core and 10GB of memory. We also build a hierarchy of 32768 clusters and use the two hierarchies to perform querying on a dataset of \textbf{264M} records. We vary the number of executors and report the results in terms of queries per second (QPS) in Table~\ref{tab:scalability}.
\begin{table}[h]
    \caption{Queries per second (QPS) with varying number of executors for the People dataset.}
    \label{tab:scalability}
    \centering
    \begin{tabular}{|c||c|c|c|c|}
        \hline
        & \multicolumn{4}{c|}{\textbf{Number of executors}}\\ 
        & \textbf{10} & \textbf{20} & \textbf{50} & \textbf{100} \\
        \hline
        \hline
        \textbf{16384} & 118918 & 209524 & 293333 & 314285 \\
        \textbf{32768} & 115789 & 191304 & 275000 & 307692 \\
        \hline
    \end{tabular}
\end{table}

\subsection{Purity}
\label{sec:purity}
In this particular task, we use \ASPD~ based strategy to build a deep hierarchy with almost completely pure leaf clusters. We define purity-- 
$$
    \mathcal{P} = \frac{1}{|\mathcal{C}|} \sum_{\mathcal{C}_i \in \mathcal{C}} \frac{1}{|\mathcal{C}_i|}\max_{l \in labels} count(l, \mathcal{C}_i)
$$
where $\mathcal{C}$ is the set of all clusters (leaves in case of hierarchy), $l$ is the set of all labels, and, 
$count(l, \mathcal{C}_i)$ is the number of points in cluster $\mathcal{C}_i$ with label $l$. Intuitively, purity measures the average cleanness of clustering. The higher the purity, the more the cleanness.

We evaluate the purity v/s query time for the \News, \LComments, \ALOI datasets. We also perform the experiment on the \texttt{People} dataset of the social networking site. For the \texttt{People} dataset, we use the \textit{member industry} tag as a label and evaluate the purity v/s query time. The query times are averaged for 8000 points for the \News dataset, 20000 for \ALOI and \LComments datasets and 1M for \texttt{People} dataset. The query time is reported as the average time for one query.

As a baseline, we perform flat, \texttt{k-means} clustering. The query time for this task is the \emph{cluster assignment time}, denoted by \textbf{T}. We obtain two types of results using \texttt{k-means} -- i) when the query time is similar to the query time in the hierarchy, and, ii) when the purity is similar to that of the leaf clusters in the hierarchy. To perform this evaluation, we first build a hierarchy till the leaves are almost completely pure, and use this hierarchy to obtain query times. We also obtain the number of leaves in this hierarchy (say $t$). Since we want to compare the purity and query time for \texttt{k-means}, we perform \texttt{k-means} with i) number of clusters, $k = \log_2(t)$ -- this gives us a query time which is similar to that of the hierarchy, and ii) $k = t$ -- this gives us a purity similar to that of the hierarchy. For \texttt{k-means}, we use the k-means method in scikit-learn. For the \texttt{People} dataset, we built the hierarchy with $~16000$ clusters and performed \texttt{k-means} with $k = 16000$ clusters.

The results are tabulated in Table~\ref{tab:purity}. Here, we denote \texttt{k-means} results by k1 and k2, where k1 is the case where we obtain similar cluster lookup time, or, query time as the \ASPD~ based approach, and, k2 is when we obtain high purity clusters (of similar purity as \ASPD~ based approach) using k-means algorithm. \ASPD~ based clustering consistently outperforms the \texttt{k-means} based clustering.

A low purity indicates that we can make errors in classification of a new point, which is quite undesirable for our use case. If a \Spam type is not blocked, it can have various effects leading to user dissatisfaction. Also, if a \Clear type is blocked, the user may be outraged and this again leads to user dissatisfaction. Having high query time also impacts the performance as a comment would not be available for a noticeably long period of time. Thus, our particular use case demands high purity and low cluster assignment times. In such a scenario, the hierarchical clustering algorithm clearly outperforms all other reasonable baselines.

\begin{table}[htb]
    \centering
    \caption{Purity (P) v/s Query Time per point (T) in milliseconds for various datasets}
    \label{tab:purity}
    \begin{tabular}{|c|cc|cc|cc|cc|}
        \hline
        & \multicolumn{2}{c|}{\textbf{20NG}} & \multicolumn{2}{c|}{\textbf{Comments}} & \multicolumn{2}{c|}{\textbf{ALOI}} & \multicolumn{2}{c|}{\textbf{People}} \\
        & \textbf{P} & \textbf{T} & \textbf{P} & \textbf{T} & \textbf{P} & \textbf{T} & \textbf{P} & \textbf{T}\\
        \hline
        \textbf{EV} & 0.99 & 0.14 & 0.99 & 0.07 & 0.99 & 0.117 &  0.26 & 0.0007\\
        \hline
        \textbf{k1} & 0.69 & 0.17 & 0.72 & 0.10 & 0.34 & 0.114 &  - & -\\
        \hline
        \textbf{k2} & 0.91 & 0.98 & 0.9 & 2.34 & 0.96 & 1.763 &  0.25 & 0.002\\
        \hline
    \end{tabular}
\end{table}

\subsection{Hierarchical Clustering Cost}
{Exact calculation of hierarchical clustering objective requires $O(n^2 \log n)$ time and $O(n^2)$ space, impractical for large data. 
However, we evaluate the cost for \texttt{MNIST} (Train set = 60k, Dimensions = 784), \texttt{20NG} (Train set = 11.8k, Dimensions = 100) and \texttt{ALOI} (Subsampled train set = 60k, Dimensions = 128) datasets for various algorithms. Since the costs are of the order of $10^{12}-10^{14}$, we report the relative cost in Table~\ref{tab:cost}. 
For all three datasets, \EV and \AEV obtains similar cost, \RP and \texttt{2-means} obtain relatively higher costs. 
All linkage based algorithms obtain significantly worse cost as compared to the \EV and \AEV based algorithms.\footnote{Note that for the cost experiment, since we are dealing with small datasets, we perform the exact split instead of the (1/3, 2/3) split used in other experiments.}}

\begin{table}[h]
\centering
\caption{Relative Hierarchical Clustering Cost for MNIST, 20NG and ALOI datasets.}
\label{tab:cost}
\begin{tabular}{|c||c|c|c|c|c|c|c|c|}
\hline
& \multicolumn{4}{c|}{\textbf{HTNN}} & \multicolumn{4}{c|}{\textbf{LT}} \\
               & {\textbf{EV}} & {\textbf{AEV}} & {\textbf{RP}} & {\textbf{2-means}} & {\textbf{Single}} & {\textbf{Average}} & {\textbf{Complete}} & {\textbf{Ward's}} \\
\hline
\textbf{MNIST} & 1                               & 1.01                             & 1.05                            & 1.1                                  & 1.17                                & 1.13                                 & 1.12                                  & 1.13                                \\
\textbf{20NG}  & 1.01                            & 1                                & 1.01                            & 1.05                                 & 1.14                                & 1.14                                 & 1.14                                  & 1.14                                \\
\textbf{ALOI}  & 1                               & 1                                & 1.03                            & 1.09                                 & 1.16                                & 1.1                                  & 1.11                                  & 1.1  \\               
\hline
\end{tabular}
\end{table}

\subsection{Discussions}
\label{sec:connecting}
{All tasks were defined keeping in mind specific business requirements and with the intention of measuring the quality of the clustering. The requirements state that we need a highly unsupervised, near real-time algorithm to detect newer types of content. 
Our first task builds a hierarchy on the Zoo dataset and is used solely for visualization purposes. This task demonstrates the quality of our hierarchy visually.
Our second task in Section~\ref{sec:evalNN} elaborates on how to use the hierarchy to classify the classes of incoming, user generated content. Considering the trade-off between query time and the F1-score, our proposed method out performs other baselines for both text based datasets and for Covertype dataset. Our third task in Section~\ref{sec:anomaly} makes use of the classification task and outlines an algorithm which uses the hierarchy to identify new content types. We also demonstrate how the subtype-supertype relationship can be used to provide quick responses for the incoming data. In Section~\ref{sec:manual}, we also estimate the benefits of using a hierarchy for manual labelling of anomalous content. Since manual labelling of new types requires human intervention and effort, it is ideal to borrow the response actions from the supertype over sending it for manual labelling. The fourth task, Section~\ref{sec:scale}, demonstrates the ability to scale to multi-million sized data by exploiting parallelizability of various computations. The fifth and final tasks in Section~\ref{sec:purity}, explores the `cleanliness` of the leaves of the hierarchy and the query time obtained using the hierarchy as compared to a flat clustering using \texttt{k-means}. Since our use case targets classifying content while identifying anomalous and adversarial content at scale while maintaining low query time and high purity clusters, employing a hierarchical clustering is the only natural approach.}

\section{Conclusion}

In this paper, showed that several partitioning schemes are suitable for creating hierarchical clustering at scale for the purpose of content classification. Our techniques are practical and have approximation guarantees for a specific objective. We demonstrated that our methods scale well at training as well as at runtime, adapt efficiently to novel content classes, and have favorable hierarchical cost characteristics. Our method is thus suitable to the social networking sites' application requirements as mentioned at the beginning of this paper. We also showed that our method performs well on several open source datasets and other performant algorithms.  

Adapting our method to an online, incremental, learning setting remains as future work.

\section*{Acknowledgement}
We are grateful to the anonymous reviewers for their helpful feedback. This project has received funding from the Engineering and Physical Sciences Research Council, UK (EPSRC) under Grant Ref: EP/S03353X/1. Anirban  acknowledges the kind support of the N. Rama Rao Chair Professorship at IIT Gandhinagar, the Google India AI/ML award (2020), Google Faculty Award (2015), and CISCO University Research Grant (2016).

\bibliographystyle{unsrt}  
\bibliography{references}

\end{document}